\documentclass[journal]{IEEEtran}
\usepackage{ifpdf}
\usepackage{cite}
\usepackage[pdftex]{graphicx}
\usepackage{amsmath}
\usepackage{algorithmic}
\usepackage{array}
\usepackage{caption}
\usepackage[caption=false,font=footnotesize]{subfig}
\usepackage{fixltx2e}
\usepackage{stfloats}
\usepackage{url}
\usepackage{amsthm}
\usepackage{amsmath}
\usepackage{amsfonts}
\usepackage{amssymb}
\usepackage[T1]{fontenc} 
\usepackage{amsmath}
\usepackage{txfonts}
\usepackage{bm} 
\usepackage[all]{xy}
\usepackage{enumitem}
\usepackage{float}
\usepackage{amsmath}
\interdisplaylinepenalty=2500
\usepackage[usenames, dvipsnames]{color}

\theoremstyle{definition}

\newtheorem{thm}{Theorem}

\newtheorem{lem}{Lemma}

\newtheorem{define}{Definition}

\newcommand{\var}{\textrm{Var}}

\newcommand{\no}{\nonumber}


\hyphenation{op-tical net-works semi-conduc-tor}

\begin{document}

\title{Achieving Perfect Location Privacy in Wireless Devices Using Anonymization}

\author{Zarrin~Montazeri~\IEEEmembership{Student Member,~IEEE,}
        Amir~Houmansadr~\IEEEmembership{Member,~IEEE,}
        Hossein~Pishro-Nik~\IEEEmembership{Member,~IEEE,}
        \thanks{Z. Montazeri is with the Department
        of Electrical and Computer Engineering, University of Massachusetts, Amherst,
        MA, 01003 USA e-mail: (seyedehzarin@umass.edu).}
        \thanks{A. Houmansadr is with the College of Information and Computer Sciences, University of Massachusetts, Amherst,
        MA, 01003 USA e-mail:(amir@cs.umass.edu)}
        \thanks{H. Pishro-Nik is with the Department
        of Electrical and Computer Engineering, University of Massachusetts, Amherst,
        MA, 01003 USA e-mail:(pishro@engin.umass.edu)}
        \thanks{This work was supported by National Science Foundation under grants CCF 0844725 and CCF 1421957. Parts of this work was presented in Annual Conference on Information Science and Systems (CISS 2016)~\cite{montazeri2016defining}, and in International Symposium on Information Theory and Its Applications (ISITA 2016)~\cite{Mont1610Achieving}.}}

\maketitle

\begin{abstract}

The popularity of mobile devices and location-based services (LBS) has created great concern regarding the location privacy of their users. Anonymization is a common technique that is often used to protect the location privacy of LBS users. Here, we present an information-theoretic approach to define the notion of perfect location privacy. We show how LBS's should use the anonymization method to ensure that their users can achieve perfect location privacy.

First, we assume that a user's current location is independent from her past locations. Using this i.i.d model, we show that if the pseudonym of the user is changed before $O(n^{\frac{2}{r-1}})$ observations are made by the adversary for that user, then the user has perfect location privacy. Here, $n$ is the number of the users in the network and $r$ is the number of all possible locations that users can go to.

Next, we model users' movements using Markov chains to better model real-world movement patterns. We show that perfect location privacy is achievable for a user if the user's pseudonym is changed before $O(n^{\frac{2}{|E|-r}})$ observations are collected by the adversary for the user, where $|E|$ is the number of edges in the user's Markov chain model.

\end{abstract}

\begin{IEEEkeywords}
Location Privacy, Mobile Networks, Information Theoretic Privacy, Anonymization, Location Privacy Protecting Mechanism (LPPM), Markov Chains.
\end{IEEEkeywords}


\section{Introduction}

\IEEEPARstart{M}{obile} devices offer a wide range of services by recording and processing the geographic locations of their users. Such services, broadly known as location-based services (LBS), include navigation, ride-sharing, dining recommendation, and auto-collision warning applications. While such LBS applications offer a wide range of popular and important services to their users, they impose significant privacy threats because of their access to the location information of these wireless devices. Privacy compromises can also be launched by other types of adversaries including third-party applications, nearby mobile users, and cellular service providers.


To protect the location privacy of LBS users, various mechanisms have been designed~\cite{shokri2012protecting, gruteser2003anonymous, hoh2007preserving}, which are known as Location-Privacy Protection Mechanisms (LPPM). These mechanisms tend to perturb the information of wireless devices, such as the user's identifier or location coordinates, before they get disclosed to the LBS application. LPPMs can be classified into two classes; those that perturb the user's identity information are known as \textit{identity perturbation mechanisms} while those that perturb the location information of the users are known as \textit{location perturbation mechanisms}.
Improving the location privacy of users using these LPPMs usually comes at the price of performance degradation for the underlying LBS's, so finding the optimal LPPM with respect to the LBS is still a great concern.

In the proposed framework, we employ the anonymization technique to hide the identity of users over time. First, we assume that each user's current location is independent from her past locations in order to simplify the derivations. Then, we model the users' movements by Markov chains which is a more realistic setting by considering the dependencies between locations over time. When dealing with privacy, it is advisable to assume the strongest model for the adversary, so in this paper we assume that the adversary has complete statistical knowledge of the users' movements. We formulate a user's location privacy based on the mutual information between the adversary's anonymized observations and the user's actual location information. We define the notion of \emph{perfect location privacy} and show that with a properly designed anonymization method, users can achieve perfect location privacy. 

Parts of this work has been previously presented as two conference
publications~\cite{montazeri2016defining,Mont1610Achieving}.
In this manuscript, we extend our previous work in \cite{montazeri2016defining,Mont1610Achieving}
by offering new results, analysis, and perspective.
In particular, we provide a clean-slate proof for Theorem~\ref{two_state_thm}
to make sure all parts of the proof are presented in a rigorous way.
More specifically, the old proof in \cite{montazeri2016defining} was only presented in a summary format; a few lemmas were stated without proof or with just a sketch of a proof.
On the other hand, here we provide the new proof along with all the required details.
Also, we have revised and extended our discussions. For example, Section \ref{sec:example} is added to the paper to clearly elaborate on the problem setting in the paper.

It is worth noting that this paper focuses on the theoretical foundations of the location privacy problem when anonymization is used as an LPPM mechanism. Needless to say,
when implementing anonymization-based LPPMs in practice our analysis needs
to get adjusted to each scenario's specific threat model, e.g., the number of mobile entities,
the capabilities of the adversary (the number and location of observation points, prior knowledge about mobile entities, etc.), the extent of possible geographic locations, etc.

%

\section{Related Work}
Location privacy has been an active field of research over the past decade~\cite{wernke2014classification, wang2015toward, wang2015privacy, niu2015enhancing, cai2015cloaking,kido2005anonymous, lu2008pad, shokri2011quantifying, shokri2011quantifying2, shokri2012protecting}. Studies in this field
can be classified into two main classes: 1) Designing effective LPPMs for specific LBS systems and platforms, 2) Deriving theoretical models for location privacy, e.g., deriving metrics to quantify the location privacy.

The designed LPPMs can be classified into two classes: 1) Location perturbation LPPMs, 2) Identity perturbation LPPMs.
Location perturbation LPPMs aim at obfuscating the location
information of the users over time and geographical domain with methods such as cloaking,~\cite{wernke2014classification, cai2015cloaking}, and adding dummy locations,~\cite{kido2005anonymous, lu2008pad}.
On the other hand, identity perturbation LPPMs try to obfuscate the user's identity while using an LBS. Some common approaches to perturb the identity of the user is to either exchange users' identifiers,~\cite{beresford2004mix}, or assign random pseudonyms to them, known as anonymization technique,~\cite{sweeney2002k, ghinita2007prive}. The former method usually uses some pre-defined regions, called \textit{mixed-zones}, to exchange users' identifiers within those regions. As users cross such regions, they exchange their identifiers with other users in the same region using an encryption protocol to confuse the adversary,~\cite{beresford2003location, freudiger2009optimal}. 

Previous studies have shown that using anonymization alone is not enough to protect users' location privacy in real-world scenarios where users go to unique locations. Particularly, Zang et al.\ demonstrate that an adversary has a significant advantage in identifying users who visit unique locations \cite{zang2011anonymization}. Also, Golle and Partridge show that the possibility of user identification based on anonymized location traces is significantly increased when the individual's home and work locations are known~\cite{golle2009anonymity}.
Please note that this does not contradict the analysis and findings of our paper as we use a different setting. 
First, our analysis seeks to find the theoretical limits of privacy for situations where the number of users ($N$) goes to infinity, which is not the case in previous studies like~\cite{golle2009anonymity,zang2011anonymization}.
Increasing the number of user reduces an adversary's confidence in distinguishing different users. 
Second, in our analysis we assume 
``continuous'' density functions for the movements of the users across different locations (e.g., the $f_P(p)$ function in Section~\ref{sec:model}). Therefore, user distributions do not contain Dirac delta functions representing their unique  locations. 
Note that this is not an unrealistic assumption; 
in real-world scenarios with users having unique locations, we assume that the location information is pre-processed to satisfy this continuity requirement. Such pre-processing can be performed in two ways; first, by reducing the granularity of locations, e.g., in our analysis we divide a region of interest into a number of girds (i.e., into $r$ coarse-grained locations). 
Second, an obfuscation mechanism can be applied to location traces to ensure they satisfy the continuity requirement. 
Further discussion on implementing such pre-processing is out of the scope of our work and we leave it to future work.

A related, but in parallel, approach to our study is differential privacy-based mechanisms.
Differential privacy is mainly studied in the context of databases containing sensitive information, where the goal of differential privacy is to respond queries on the aggregation of the information in the database without revealing sensitive information about the individual entries in the database. 
Differential privacy has been extensively studied in the context of location privacy, i.e., to prevent data leakage from location information databases~\cite{wang2015privacy, lee2012differential, bordenabe2014optimal, chatzikokolakis2015geo, nguyen2013differential, machanavajjhala2008privacy}. 
The goal here is to insure that the presence of no single user could noticeably change the outcome of the aggregated location information. For instance, Ho et al.~\cite{ho2011differential} proposed a differentially private location pattern mining algorithm using quadtree spatial decomposition.
Some location perturbation LPPMs are based on ideas from differential privacy~\cite{chatzikokolakis2013broadening, shokri2014optimal, chatzikokolakis2015location, andres2013geo, bordenabe2014optimal}. 
For instance, Dewri~\cite{dewri2013local} suggest to design obfuscation LPPMs by applying differential perturbations.
Alternatively, Andres et al.\ hide the exact location of each user in a region by adding Laplacian distributed noise  to achieve a desired level of geo-indistinguishability~\cite{andres2013geo}.
Note that our approach is entirely in parallel with this line of work. 
Our paper tries to achieve the theoretical limits on location privacy\textemdash independent of the LPPM mechanisms being used\textemdash while differential privacy based studies on location privacy try to \emph{design} specific LPPM mechanisms under very specific application scenarios.

Several works aim at quantifying the location privacy of mobile users. A common approach is called K-anonymity,~\cite{gruteser2003anonymous, sweeney2002k}. In K-anonymity, each user's identity is kept indistinguishable within a group of $k-1$ other users. On the other hand, Shokri et al.~\cite{shokri2011quantifying, shokri2011quantifying2} define the expected estimation error of the adversary as a metric to evaluate LPPMs. Ma et al.~\cite{ma2009location} use the uncertainty of the users' location
information to quantify the location privacy of the user in vehicular networks.
Li et al.~\cite{li2009tradeoff} define metrics to show the tradeoff between the privacy and utility of LPPMs.

Wang et al.\ tried to protect the privacy of the users for context sensing on smartphones, using Markov decision process (MDP)~\cite{wangprivacy}. The adversary's approach and user's privacy preserving mechanism is changing during time. The goal is to obtain the optimal policy of the users.

Previously, the mutual information has been used as a privacy metric in different topics,~\cite{salamatian2013hide, csiszar1996almost, calmon2015fundamental, sankar2013utility, yamamoto1983source}. However, in this paper we use the mutual information specifically for location privacy. For this reason, a new setting for this privacy problem will be provided and discussed. Specifically, we provide an information theoretic definition for location privacy using the mutual information. We show that wireless devices can achieve provable perfect location privacy by using the anonymization method in the suggested way.

In \cite{matching}, the author studies asymptotically optimal matching of sequences to source distributions. However, there are two key differences between \cite{matching} and this paper. First, \cite{matching} looks only at the optimal matching tests, but does not consider any privacy metric (i.e., perfect privacy) as considered in this paper. The major part of our work is to show that the mutual information converges to zero so we can conclude there is no privacy leak (hence perfect privacy). Also, the setting of \cite{matching} is different as it assumes a fixed distribution on sources (i.e., classical inference) as we assume the existence of a general (but possibly unknown) prior distributions for the sources (i.e. a Bayesian setting).

\section{Framework}
\subsection{Defining Location Privacy}

 In the proposed framework, we consider a region in which a large number of wireless devices are using an LBS.  To support their location privacy, the anonymization technique is being used by the LBS. An outsider adversary $\mathcal{A}$ is interested in identifying users based on their locations and movements. We consider this adversary to be the strongest adversary that has complete statistical knowledge of the users' movements based on the previous observations or other resources. The adversary has a model that describes users' movements as a random process on the corresponding geographic area.

Let $X_u(t)$ be the location of user $u$ at time $t$, and $n$ be the number of users in our network. The location data of users can be represented in the form of the following stochastic processes:
\begin{center}
\begin{tabular}{ c c c c c c c}
$ X_1(1)$& $X_1(2)$& $X_1(3)$& $\cdots$& $X_1(m)$& $X_1(m+1)$& $\cdots$\\
$ X_2(1)$& $X_2(2)$& $X_2(3)$& $\cdots$& $X_2(m)$& $X_2(m+1)$& $\cdots$\\
$ X_3(1)$& $X_3(2)$& $X_3(3)$& $\cdots$& $X_3(m)$& $X_3(m+1)$& $\cdots$\\
$ \vdots$&$\vdots$&$ \vdots$& $\vdots$& $\vdots$ &$ \vdots$&$\vdots$\\
$ X_n(1)$& $X_n(2)$& $X_n(3)$& $\cdots$& $X_n(m)$& $X_n(m+1)$& $\cdots$\\
$ \vdots$&$\vdots$&$ \vdots$& $\vdots$& $\vdots$ &$ \vdots$&$\vdots$
\end{tabular}
\end{center}
 The adversary's observations are anonymized versions of the $X_u(t)$'s produced by the anonymization technique. She is interested in knowing $X_u(t)$ for $u=1,2,...,n$ based on her $m$ anonymized observations for each of the $n$ users, where $m$ is a function of $n$, e.g., $m=m(n)$. Thus, at time $m$, the data shown in the box has been produced:

\begin{center}
\begin{tabular}{ c c c c c c c}
  \cline{1-5}
\multicolumn{1}{|c}{
$ X_1(1)$}& $X_1(2)$& $X_1(3)$& $\cdots$& \multicolumn{1}{c|}{$X_1(m)$}& $X_1(m+1)$& $\cdots$\\
\multicolumn{1}{|c}{
$ X_2(1)$}& $X_2(2)$& $X_2(3)$& $\cdots$& \multicolumn{1}{c|}{$X_2(m)$}& $X_2(m+1)$& $\cdots$\\
\multicolumn{1}{|c}{
$ X_3(1)$}& $X_3(2)$& $X_3(3)$& $\cdots$& \multicolumn{1}{c|}{$X_3(m)$}& $X_3(m+1)$& $\cdots$\\
\multicolumn{1}{|c}{
$ \vdots$}&$\vdots$&$ \vdots$& $\vdots$&\multicolumn{1}{c|}{ $\vdots$}&$ \vdots$&$\vdots$\\
\multicolumn{1}{|c}{
$ X_n(1)$}& $X_n(2)$& $X_n(3)$& $\cdots$& \multicolumn{1}{c|}{$X_n(m)$}& $X_n(m+1)$&
$\cdots$ \\  \cline{1-5}
$ \vdots$&$\vdots$&$ \vdots$& $\vdots$& $\vdots$ &$ \vdots$&$\vdots$
\end{tabular}
\end{center}

 The goal of this paper is to find the function $m=m(n)$ in a way that perfect privacy is guaranteed. Let $\textbf{Y}^{(m)}$ be a collection of anonymized observations available to the adversary. That is, $\textbf{Y}^{(m)}$ is the anonymized version of the data in the box. We define \emph{perfect location privacy} as follows:

\begin{define}
User $u$ has perfect location privacy at time $t$, if and only if
\begin{align}
\no  \lim_{n\rightarrow \infty} I \left(X_u(t);{\textbf{Y}^{(m)}}\right) =0,
\end{align}
where $I(X;Y)$ shows the mutual information between $x$ and $y$.
\end{define}
The above definition implies that over time, the adversary's anonymized observations do not give any information about the user's location. The assumption of large $n$, ($n \rightarrow \infty$), is valid for almost all applications that we consider since the numbers of users for such applications are usually very large.

In order to achieve perfect location privacy, we only consider anonymization techniques to confuse the adversary. In particular, the anonymization can be modeled as follows:

We perform a random permutation $\Pi^{(n)}$, chosen uniformly at random among all $n!$ possible permutations on the set of $n$ users, and then assign the pseudonym $\Pi^{(n)}(u)$ to user $u$
\begin{equation}
\nonumber  \Pi^{(n)}:\{1, 2,\cdots,n\} \mapsto \{1, 2, \cdots,n\}.
\end{equation}
Throughout the paper, we may use $\Pi(u)$ instead of $\Pi^{(n)}(u)$ for simplicity of the notation.

For $u=1,2,\cdots,n$, let \textbf{X}$_u^{(m)}$ be a vector which shows the $u^{th}$ user's locations at times $t={1, 2, \dots, m}$:
\[
  \textbf{X}_u^{(m)} = \left(X_{u}(1), X_{u}(2), \cdots, X_{u}(m)\right)^{T}
\]
Using the permutation function $\Pi^{(n)}$, the adversary observes a permutation of users' location vectors, $\textbf{X}_u^{(m)}$'s. In other words, the adversary observes

\begin{align}
\textbf{Y}^{(m)} &=\textrm{Perm}\left( \textbf{X}_{1}^{(m)}, \textbf{X}_{2}^{(m)}, \cdots,  \textbf{X}_{n}^{(m)}; \Pi^{(n)} \right) \\
\nonumber &=\left( \textbf{X}_{\Pi^{-1}(1)}^{(m)}, \textbf{X}_{\Pi^{-1}(2)}^{(m)}, \cdots,  \textbf{X}_{\Pi^{-1}(n)}^{(m)}\right)\\
\nonumber &=\left( \textbf{Y}_{1}^{(m)}, \textbf{Y}_{2}^{(m)}, \cdots,  \textbf{Y}_{n}^{(m)} \right) \ \ \\
\nonumber & \textbf{Y}_{u}^{(m)} = \textbf{X}_{\Pi^{-1}(u)}^{(m)} ,\  \textbf{Y}_{\Pi(u)}^{(m)} = \textbf{X}_{u}^{(m)}
\end{align}

\noindent where \textrm{Perm}(.) shows the applied permutation function. Then,
\begin{align}
\no \textbf{Y}_{\Pi{(u)}}^{(m)}= \textbf{X}_u^{(m)} = \left( X_u(1), X_{u}(2), \cdots,  X_{u}(m)\right)^{T}.
\end{align}

\section{Example} \label{sec:example}
Here we provide a simple example to further elaborate the problem setting. Assume that we have only three users, $n = 3$, and five locations, $r = 5$, that users can occupy (Figure \ref{fig:example}). Also, let's assume that the adversary can collect $m(n)=4$ observations per user. Each user creates a path as below:
\[
\begin{tabular}{c| c}
    user  & path  \\   \hline
    user1 & $1\rightarrow 2\rightarrow3\rightarrow4$
    \\
    user2 &$2\rightarrow 1\rightarrow3\rightarrow5$
    \\
   user3 & $4\rightarrow 5\rightarrow1\rightarrow3$
\end{tabular}
\] \[
\begin{tabular}{c c c}
$\textbf{X}_1^{(4)} =  \begin{bmatrix}
 1 \\ 2 \\ 3 \\ 4
\end{bmatrix},$
&
$\textbf{X}_2^{(4)} = \begin{bmatrix}
 2 \\ 1 \\3 \\ 5
\end{bmatrix},$
&
$\textbf{X}_3^{(4)} = \begin{bmatrix}
  4 \\ 5 \\ 1 \\ 3
\end{bmatrix},$
\end{tabular}
\textbf{X}^{(4)} = \begin{bmatrix}
1 & 2 & 4 \\
2 & 1 & 5 \\
3 & 3 & 1 \\
4 & 5 & 3
\end{bmatrix}
\]

\begin{figure}
\centering
\includegraphics[width = 0.3\linewidth]{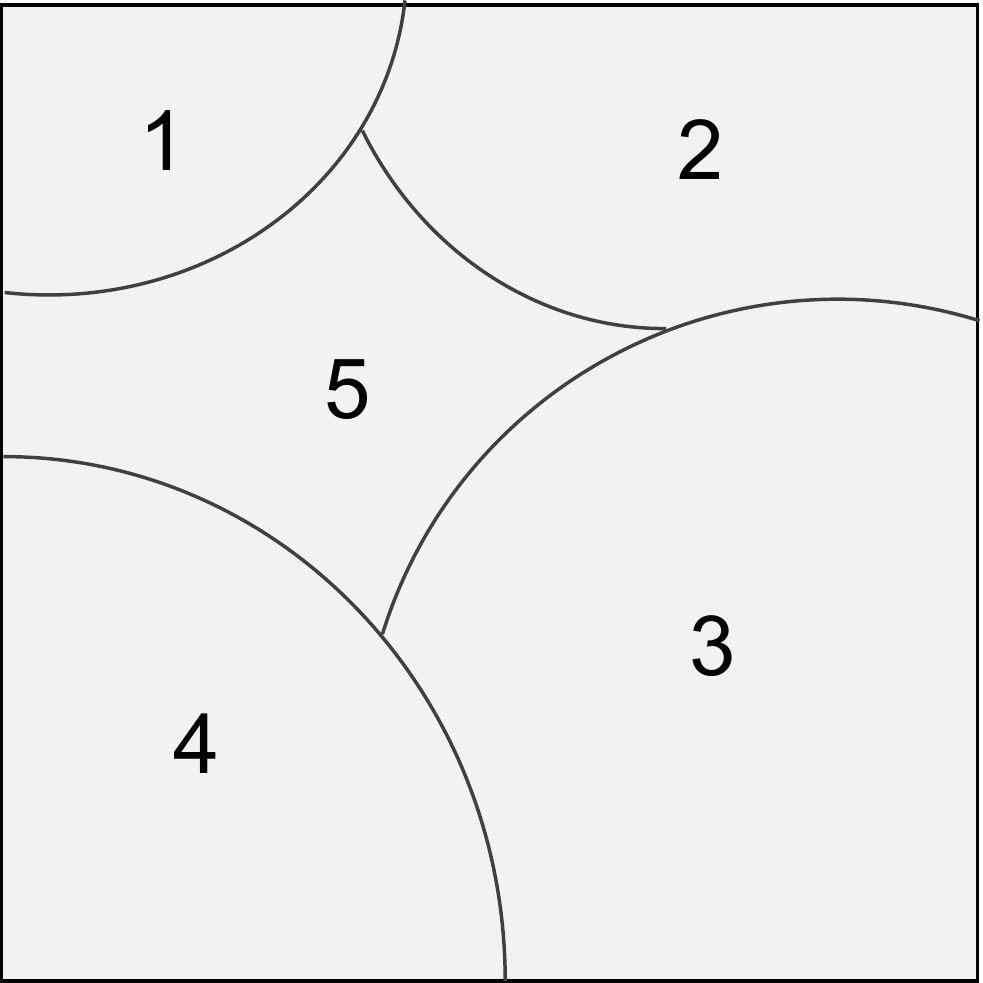}
\caption{An area is divided into five regions that users can occupy.}
\label{fig:example}
\end{figure}

To anonymize the users, we will assign a pseudonym to each. The pseudonyms are determined by the function defined by a random permutation on the user set:
\begin{align}
\no \Pi^{(3)} : \{1,2,3\} \mapsto \{1,2,3\}
\end{align}

For this example, suppose that the permutation function is given by $\Pi(1)=3$, $\Pi(2)=1$, and $\Pi(3)=2$.  The choice of the permutation is the only piece of information that is not available to the adversary.
So here, the adversary observes anonymized users and their paths:
\[
\begin{tabular}{c| c}
    pseudonym  & observation  \\   \hline
    user 1 & $ 2\rightarrow 1\rightarrow3\rightarrow5$
    \\
    user 2 & $ 4\rightarrow 5\rightarrow1\rightarrow3$
    \\
    user 3 & $ 1\rightarrow 2\rightarrow3\rightarrow4$
\end{tabular}
 \ \ \
 \begin{tabular}{c}
\\
$\textbf{Y}^{(4)} = \begin{bmatrix}
2 & 4 & 1\\
1 & 5 & 2\\
3 & 1 & 3 \\
5 & 3 & 4
\end{bmatrix}
$\end{tabular}
\]
and she wants to find which user (with the pseudomym $user 3$) actually made $1\rightarrow 2\rightarrow3\rightarrow4$, and so on for the other users. Based on the number of observations that the adversary collects for each user, $m = m(n) = 4$, and also the statistical knowledge of the users' movements, she aims at breaking the anonymization function and de-anonymizing the users. The accuracy of this method depends on the number of observations that the adversary collects, and thus our main goal in this paper is to find the function $m(n)$ in a way that the adversary is unsuccessful and the users have perfect location privacy.

\section{i.i.d Model}
\subsection{Perfect Location Privacy for a Simple Two-State Model}
To get a better insight about the location privacy problem, here we consider a simple scenario where there are only two states to which users can go, states $0$ and $1$. At any time $k \in \{0,1,2,\cdots \}$, user $u$ has probability $p_u \in (0,1)$ to be at state $1$, independent from her previous locations and other users' locations. Therefore,
\[ X_u(k) \sim Bernoulli(p_u).\]

To keep things general, we assume that $p_u$'s are drawn independently from some continuous density function, $f_P(p)$, on the $(0,1)$ interval. Specifically, there are $\delta_2>\delta_1>0$ such that\footnote{The condition $\delta_1 <f_P(p) <\delta_2$ is not actually necessary for the results and can be relaxed; however, we keep it here to avoid unnecessary technicalities.}
\begin{equation}
\no\begin{cases}
    \delta_1 <f_P(p) <\delta_2 & p \in (0,1)\\
    f_P(p)=0 &  p \notin (0,1)
\end{cases}
\end{equation}
The values of $p_u$'s are known to the adversary. Thus, the adversary can use this knowledge to potentially identify the users. Note that our results do not depend on the choice of $f_P(p)$ and we do not assume that we know the underlying distribution $f_P(p)$. All we assume here is the existence of such distribution. The following theorem gives a general condition to guarantee perfect location privacy:

\begin{thm}\label{two_state_thm}
For two locations with the above definition and anonymized observation vector of the adversary, $\textbf{Y}^{(m)}$, if all the following holds
\begin{enumerate}
 \item $m = cn^{2 -  \alpha}$,
 for some positive constants  $c$  and $\alpha<1$;
\item$p_1 \in (0,1)$;
\item $(p_2, p_3, \cdots, p_n) \sim f_P $;
\item $P = (p_1, p_2,\cdots, p_n)$ be known to the adversary;
\end{enumerate}
then, we have
\begin{align}
\no \forall  k \in \mathbb{N},\ \ \  \lim_{n \rightarrow \infty} I\left(X_1(k) ; \textbf{Y}^{(m)}\right) = 0
\end{align}
i.e., user $1$ has perfect location privacy.
\end{thm}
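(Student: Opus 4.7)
The plan is to show that, given the adversary's data $\textbf{Y}^{(m)}$, her posterior prediction of $X_1(k)$ concentrates at the prior $p_1$, so the conditional entropy approaches the unconditional one. For $k>m$ the value $X_1(k)$ is independent of $\textbf{X}^{(m)}$ by the i.i.d.\ in time assumption and hence of $\textbf{Y}^{(m)}$, so $I(X_1(k);\textbf{Y}^{(m)})=0$ already; by time-stationarity it suffices to treat a single arbitrary $k\le m$. Writing $J:=\Pi(1)$ and using the identity $X_1(k)=Y_J(k)$ gives
\[
\mathbb{P}\left(X_1(k)=1\mid\textbf{Y}^{(m)}\right)=\sum_{j=1}^n q_j\,Y_j(k),\qquad q_j:=\mathbb{P}\left(J=j\mid\textbf{Y}^{(m)}\right).
\]
The standard bound $I(X;Y)\le \chi^2(P_{X,Y}\|P_X\otimes P_Y)$, specialized to Bernoulli $X_1(k)$, yields $I(X_1(k);\textbf{Y}^{(m)})\le \mathbb{E}[(\widehat p-p_1)^2]/(p_1(1-p_1))$ with $\widehat p:=\sum_j q_j Y_j(k)$, so the theorem reduces to showing $\mathbb{E}[(\widehat p-p_1)^2]\to 0$.

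Next I would analyze the posterior weights $q_j$. Because the column entries are i.i.d.\ Bernoulli, the column counts $S_j:=\sum_{t=1}^m Y_j(t)$ form a sufficient statistic, and expanding over the uniform permutation expresses $q_j\propto g(p_1,S_j)\cdot\mathrm{perm}(G^{(j)})$, where $g(p,s):=p^s(1-p)^{m-s}$ and $G^{(j)}$ is the $(n-1)\times(n-1)$ matrix of residual likelihoods $g(p_u,S_{j'})$ with $u\neq 1$, $j'\neq j$. I would then define the ``competitor'' set $\mathcal{C}:=\{j:\,|S_j/m-p_1|\le\varepsilon_n\}$. By Hoeffding, $S_{\Pi(u)}/m=p_u+O(m^{-1/2})$ with high probability, and since $f_P$ is bounded between $\delta_1$ and $\delta_2$, the size $K_n:=|\mathcal{C}|$ is binomial with mean $\Theta(n\varepsilon_n)$ and concentrates around that by Chernoff. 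Choosing $\varepsilon_n:=\log n/\sqrt{m}$ simultaneously gives $\varepsilon_n\to 0$ and $K_n=\Theta(n^{\alpha/2}\log n)\to\infty$ under $m=cn^{2-\alpha}$.

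The key qualitative step is to show that, on a high-probability event, $q_j$ is essentially uniform over $\mathcal{C}$ and negligible outside. For $j,j'\in\mathcal{C}$, the ratio $\mathrm{perm}(G^{(j)})/\mathrm{perm}(G^{(j')})$ is, to leading order, $g(p_1,S_{j'})/g(p_1,S_j)$, since all non-competitor users are essentially pinned to their unique matching columns in both permanents and those factors cancel, while the remaining $(K_n-1)!$ competitor-to-competitor assignments contribute symmetric products $\prod_{j''\in\mathcal{C}\setminus\{j\}}g(p_1,S_{j''})$ up to $1+o(1)$ corrections. Combined with the explicit $g(p_1,S_j)$ prefactor, this gives $q_j/q_{j'}=1+o(1)$ for every pair $j,j'\in\mathcal{C}$, and an analogous likelihood comparison shows $q_j/q_{j'}=o(1/K_n)$ whenever $j\notin\mathcal{C}$, so the posterior mass outside $\mathcal{C}$ is negligible.

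On this good event $\widehat p$ is then well approximated by $K_n^{-1}\sum_{j\in\mathcal{C}}Y_j(k)$; each summand is Bernoulli with mean within $\varepsilon_n$ of $p_1$, so a further Chernoff bound gives $\widehat p-p_1=O(\varepsilon_n+K_n^{-1/2})=o(1)$, and on the complementary low-probability event the trivial bound $|\widehat p-p_1|\le 1$ controls the second moment. Hence $\mathbb{E}[(\widehat p-p_1)^2]\to 0$ and the theorem follows. The hard part I expect is the permanent-ratio estimate: making rigorous the cancellation of the non-competitor contributions and the symmetry among competitor assignments requires a careful swap argument over $S_n$ with quantitative error bounds, since $\mathrm{perm}$ does not factorize; once this combinatorial step is in hand, the concentration steps chain together in a routine way.
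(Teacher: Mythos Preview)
Your high-level plan matches the paper's: reduce to showing that the posterior $\widehat p=\mathbb{P}(X_1(k)=1\mid\textbf{Y}^{(m)})$ concentrates at $p_1$ (the paper phrases this as $H(X_1(k)\mid\cdot)\to H(X_1(k))$ rather than via the $\chi^2$ bound, but the content is the same), and use sufficiency of the column sums $S_j$.

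Where you diverge is the combinatorial core. You define the competitor set via the \emph{observed} counts and attack the posterior weights $q_j$ directly through permanent ratios. The paper instead defines $J^{(n)}=\{u:|p_u-p_1|<\epsilon_m\}$ from the \emph{true} parameters and then \emph{conditions} on the extra side information $\Pi(J^{(n)})=\{\Pi(u):u\in J^{(n)}\}$; since conditioning can only lower $H(X_1(k)\mid\textbf{Y}^{(m)})$, showing the genie-aided conditional entropy reaches $H(X_1(k))$ suffices. Once $\Pi(J^{(n)})$ is revealed, the posterior over $\Pi(1)$ involves only users in $J^{(n)}$, all with $p\approx p_1$, and no permanent appears. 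The paper then employs \emph{two} scales, $\epsilon_m=m^{-1/2-\phi}\ll m^{-1/2}\ll\beta_m=m^{-1/2+\theta}$ with $\theta<\phi<\frac{\alpha}{2-\alpha}$, so that (i) $|J^{(n)}|\sim n\epsilon_m\to\infty$; (ii) all counts $S_u$ for $u\in J^{(n)}$ land in an interval of half-width $m\beta_m$ with high probability; and (iii) every pairwise log-likelihood ratio is $O(m\,\beta_m\,\epsilon_m)=O(m^{\theta-\phi})\to 0$ directly.

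This is precisely the step your proposal leaves open. With your single scale $\varepsilon_n=\log n/\sqrt m$, the prefactor $g(p_1,S_j)/g(p_1,S_{j'})$ is of order $\exp(\Theta(\sqrt m\,\log n))$ for $j,j'\in\mathcal C$, so you need the permanent ratio to cancel it to multiplicative accuracy $1+o(1)$. Expanding both permanents along the single differing column shows the ratio is a weighted average of $(p_u/(1-p_u))^{S_{j'}-S_j}$ over $u\neq 1$; heuristically the weight concentrates on $u$ with $p_u\approx p_1$, which would give the claimed cancellation, but controlling the off-center contributions uniformly over the $(n-1)!$ assignments with that precision is delicate---and your ``non-competitors are pinned to unique columns'' picture is not literally true, since every $p_u$ has its own cloud of $\Theta(n^{\alpha/2})$ indistinguishable neighbors. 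The paper's genie-plus-two-scales device sidesteps all of this; adopting it would let you replace the permanent argument by an elementary likelihood-ratio computation among nearly identical Bernoulli parameters.
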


Note that although the theorem is stated for user 1, the symmetry of the problem allows it to be restated for all users. Also note that the theorem is proven for any $0<\alpha<1$. Therefore, roughly speaking, the theorem states that if the adversary obtains less than $O(n^2)$ observations per user, then all users have location privacy.

\subsection{The Intuition Behind The Proof}
Here we provide the intuition behind the proof, and the rigorous proof for Theorem~\ref{two_state_thm} is given in Appendix~\ref{sec:app}. Let us look from the adversary's perspective. The adversary is observing anonymized locations of the first user and she wants to figure out the index of the user that she is observing, in other words she wants to obtain $X_1(k)$ from $\textbf{Y}^{(m)}$. Note that the adversary knows the values of $p_1, p_2,\cdots, p_n$. To obtain $X_1(k)$, it suffices that the adversary obtains $\Pi(1)$. This is because $\textbf{X}_{1}^{(m)}= \textbf{Y}_{\Pi(1)}^{(m)}$, so
\begin{align}
\no X_1(k)&= \textrm{the kth element of }\textbf{X}_{1}^{(m)}\\
\no &= \textrm{the kth element of }\textbf{Y}_{\Pi(1)}^{(m)}.
\end{align}

Since $X_u(k)$ is a Bernoulli random variable with parameter $p_u$, to do so, the adversary can look at the averages
\begin{align}
\no \overline{Y}_{\Pi(u)}=\frac{Y_{\Pi(u)}(1)+Y_{\Pi(u)}(2)+...+Y_{\Pi(u)}(m)}{m}.
\end{align}
In fact, $\overline{Y}_{\Pi(u)}$'s provide sufficient statistics for this problem. Now, intuitively, the adversary is successful in recovering $\Pi(1)$ if two conditions hold:
\begin{enumerate}
  \item $\overline{Y}_{\Pi(1)} \approx p_1$.
  \item For all $u \neq 1$, $\overline{Y}_{\Pi(u)}$ is not too close to $p_1$.
\end{enumerate}

Now, note that by the Central Limit Theorem (CLT)
\begin{align}
\no \frac{\overline{Y}_{\Pi(u)}-p_u}{\sqrt{\frac{p_u(1-p_u)}{m}}} \rightarrow N\left(0, 1\right).
\end{align}
That is, loosely speaking, we can write
\begin{align}
\no \overline{Y}_{\Pi(u)} \rightarrow N\left(p_u, \frac{p_u(1-p_u)}{m}\right).
\end{align}

Consider an interval $I \in (0,1)$ such that $p_1$ falls into that interval and length of $I$, $l^{(n)}$, is chosen to be
\[ l^{(n)}=\frac{1}{n^{1-\eta}},\]
where $0<\eta<\frac{\alpha}{2}$ (Remember $\alpha$ was defined by the equation $m = cn^{2 -  \alpha}$ in the statement of Theorem \ref{two_state_thm}) . Note that $l^{(n)}$ goes to zero as $n$ becomes large. Also, note that for any $u \in {1,2,\cdots,n}$, the probability that $p_u$ is in $I$ is larger than $\delta_1 l^{(n)}$. In other words, since there are $n$ users, we can guarantee that a large number of $p_u$'s fall in $I$ since we have
\begin{align}
\no  n l^{(n)}  \rightarrow \infty.
\end{align}
On the other hand, note that
\begin{align}
\no \frac{\sqrt{\var(\overline{Y}_{\Pi(u)})}}{\textrm{length}(I)} &=\frac{\sqrt{\frac{p_u(1-p_u)}{m}}}{\frac{1}{n^{1-\eta}}}\\
\no &=n^{\frac{\alpha}{2}-\eta} \rightarrow \infty.
\end{align}

Note that here, we will have a large number of normal random variables $\overline{Y}_{\Pi(u)}$ whose expected values are in the interval $I$ (that has a vanishing length) with high probability and their standard deviations are much larger than the interval length (but equal to each other asymptotically, i.e. $\frac{p_1(1-p_1)}{m}$). Thus, distinguishing between them will become impossible for the adversary. In other words, the probability that the adversary will correctly identify $\Pi(1)$ goes to zero as $n$ goes to infinity. That is, the adversary will  most likely choose an incorrect value $j$ for $\Pi(1)$. In this case, since the locations of different users are independent, the adversary will not obtain any useful information by looking at $X_j(k)$. Of course, the above argument is only intuitive. The rigorous proof has to make sure all the limiting conditions work out appropriately. This has been accomplished in Appendix~\ref{sec:app}.

\subsection{Extension to $r$-States Model}
Here, we extend our results to a scenario in which we have $r \geq 2$ locations, $0,1, \cdots, r-1$. At any time $k \in \{0,1,2,\cdots \}$, user $u$ has probability $p_u(i) \in (0,1)$ to be at location $i$, independent from her previous locations and other users' locations. At any given time $k$, ${p}_u(i)$ shows the probability of user $u$ being at location $i$ and vector $\textbf{p}_u$ contains ${p}_u(i)$'s for all the locations
\[
{p}_u(i) = P(X_u(k)=i),
\]
\[
\textbf{p}_u = \left({p}_u(0), {p}_u(1), \cdots, {p}_u(r-1)\right).
\]
We assume that $p_u(i)$'s for $i=0,1,2,\cdots, r-2$ are drawn independently from some $r-1$ dimensional continuous density function $f_{\textbf{P}}(\textbf{p})$ on the $(0,1)^{r-1}$. In particular, define the range of distribution as
\begin{align}
\no  R_{\textbf{P}} = \{ (x_1, x_2, \cdots, x_{r-1}) \in (0,1)^{r-1}:  \ \ \ \ \ \ \ \ \ \ \ \ \ \ \ \ \\
\no  x_i > 0 , x_1+x_2+\cdots+x_{r-1} < 1\}.
\end{align}
Then, we assume there exists positive constants $\delta_1,\delta_2>0$ such that
\begin{equation}
\begin{cases}
\no    \delta_1 <f_{\textbf{P}}(\mathbf{p}_u) <\delta_2 & \textbf{p}_u \in R_{\textbf{P}}\\
    f_{\textbf{P}}(\mathbf{p}_u)=0 &  \textbf{p}_u \notin R_{\textbf{P}}
\end{cases}
\end{equation}
For example, Figure \ref{R_P} shows the range $R_{\textbf{P}}$ for the case where there are three locations, $r=3$.

\begin{figure}
  \centering
  \includegraphics[width=.4\linewidth, height=0.38 \linewidth]{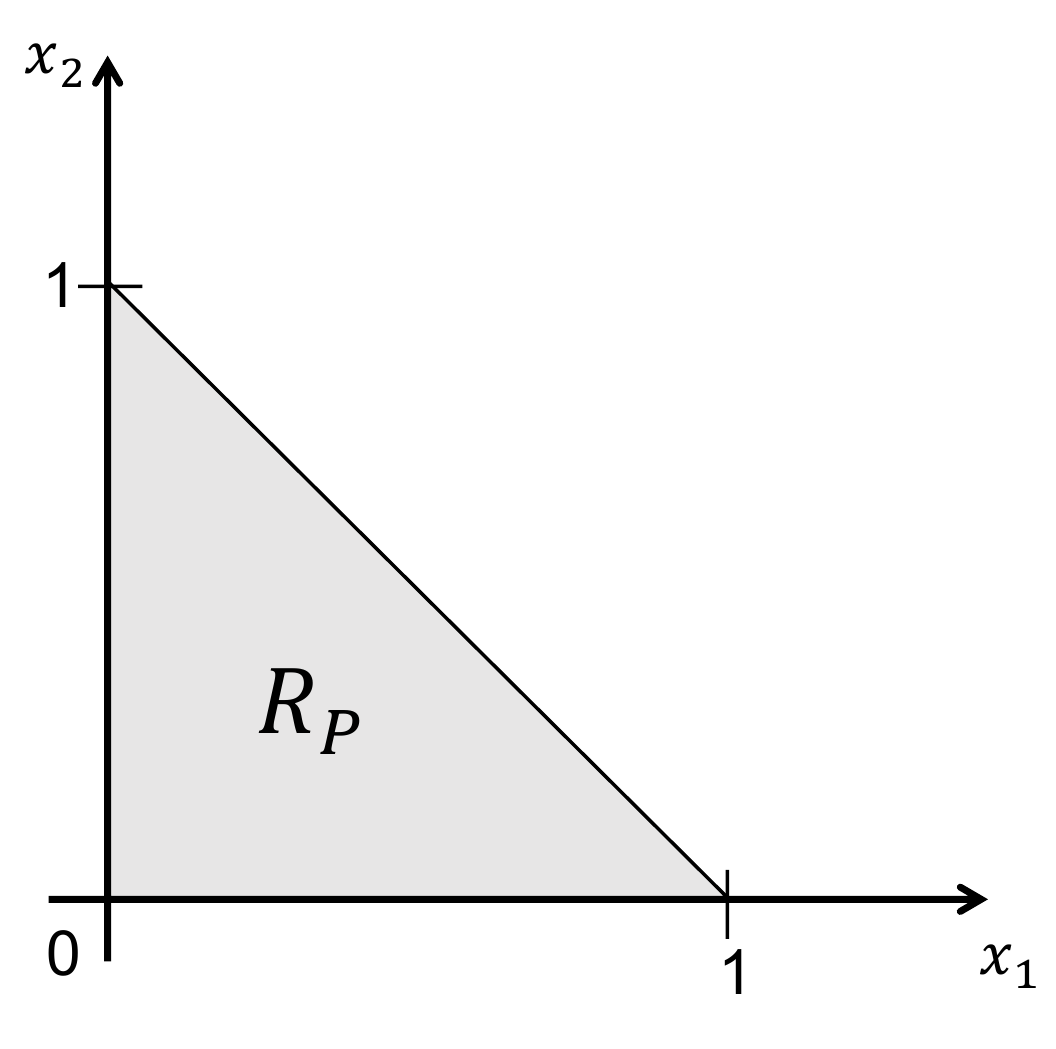}
  \centering
  \caption{$R_P$ for case $r=3, (d=2)$.}
  \label{R_P}
\end{figure}

\begin{thm}\label{r_state_thm}
For $r$ locations with the above definition and the adversary with an observation vector $\textbf{Y}^{(m)}$, if all the following holds
\begin{enumerate}
 \item $m = cn^{\frac{2}{r-1} -  \alpha}$,
 which $c,\alpha >0$ and are constant
\item$\textbf{p}_1 \in (0,1)^{r}$
\item $\left(\textbf{p}_2, \textbf{p}_3, \cdots, \textbf{p}_n\right) \sim f_{\textbf{P}} $, $0<\delta_1<f_{\textbf{P}}<\delta_2$
\item $\textbf{P} = \left(\textbf{p}_1, \textbf{p}_2,\cdots, \textbf{p}_n\right)$ is known to the adversary
\end{enumerate}

then, we have
\begin{align}
\no \forall  k \in \mathbb{N},\ \ \  \lim_{n \rightarrow \infty} I\left(X_1(k) ; \textbf{Y}^{(m)}\right) = 0
\end{align}
i.e., user $1$ has perfect location privacy.
\end{thm}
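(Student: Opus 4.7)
The plan is to follow the same blueprint as the proof of Theorem~\ref{two_state_thm}, upgrading each step from the scalar Bernoulli parameter $p_u$ to the $(r-1)$-dimensional probability vector $\textbf{p}_u = (p_u(0),\ldots,p_u(r-2))$ (the last coordinate $p_u(r-1)$ is determined by the others). First I would reduce the mutual-information bound to an indistinguishability statement about the permutation $\Pi^{(n)}$. Since locations of distinct users are independent, conditional on $\Pi(1) = j \neq 1$ the vector $\textbf{Y}_{\Pi(1)}^{(m)}$ is an i.i.d.\ draw from $\textbf{p}_j$ and is independent of $X_1(k)$; consequently $I(X_1(k);\textbf{Y}^{(m)})$ is controlled (via Fano-type or total-variation inequalities, together with the boundedness of $X_1(k)$ by $\log r$) by how sharply the adversary's posterior on $\Pi(1)$ concentrates on the correct index.

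Next I would identify the empirical frequency vectors
\begin{align}
\no \overline{\textbf{Y}}_{\Pi(u)} = \frac{1}{m}\sum_{t=1}^{m}\bigl(\mathbf{1}_{\{Y_{\Pi(u)}(t)=0\}},\ldots,\mathbf{1}_{\{Y_{\Pi(u)}(t)=r-2\}}\bigr)
\end{align}
as sufficient statistics for recovering $\Pi$. By the multivariate Central Limit Theorem, each $\overline{\textbf{Y}}_{\Pi(u)}$ is asymptotically multivariate normal, centered at $\textbf{p}_u$ with covariance $\frac{1}{m}(\mathrm{diag}(\textbf{p}_u)-\textbf{p}_u\textbf{p}_u^{T})$, so each coordinate has standard deviation of order $1/\sqrt{m}$. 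The adversary's optimal strategy therefore reduces to comparing these vector averages against the known list $\textbf{P}$.

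Then I would carry out the small-volume counting argument in $r-1$ dimensions. Fix $\eta$ with $0 < \eta < \alpha/2$ and consider the box
\begin{align}
\no I = \prod_{i=0}^{r-2}\bigl[\,p_1(i)-\tfrac{1}{2}l^{(n)},\, p_1(i)+\tfrac{1}{2}l^{(n)}\,\bigr], \qquad l^{(n)} = n^{-\frac{1}{r-1}+\eta}.
\end{align}
Its volume is $(l^{(n)})^{r-1} = n^{-1+(r-1)\eta}$, so by the lower bound $f_{\textbf{P}} > \delta_1$ the expected size of $S = \{u : \textbf{p}_u \in I\}$ is at least $\delta_1 n^{(r-1)\eta} \to \infty$; a Chebyshev or Chernoff concentration on the indicators $\mathbf{1}_{\{\textbf{p}_u\in I\}}$ gives $|S| \to \infty$ with probability tending to $1$. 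At the same time, the ratio of the standard deviation of each coordinate of $\overline{\textbf{Y}}_{\Pi(u)}$ to the side $l^{(n)}$ is $\Theta(n^{\alpha/2-\eta}) \to \infty$, so the Gaussian likelihoods $f(\overline{\textbf{Y}}_{\Pi(1)} \mid \Pi(1)=u)$ for $u\in S$ are asymptotically equal (up to $1+o(1)$ factors). Hence the adversary's posterior on $\Pi(1)$, restricted to $S$, becomes asymptotically uniform, and the probability of correctly identifying $\Pi(1)$ vanishes like $1/|S|$.

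Finally I would combine these pieces: conditional on $\Pi(1)=j\in S\setminus\{1\}$ the observation $\textbf{Y}^{(m)}$ is independent of $X_1(k)$, so the mutual information is at most $(\log r)\cdot P(\Pi(1)\notin S \text{ or } \Pi(1)=1\mid \textbf{Y}^{(m)}) + o(1)$, and both terms vanish. The main obstacle will be making the likelihood-ratio comparison rigorous in the multivariate setting: the Gaussian covariance matrices $\frac{1}{m}(\mathrm{diag}(\textbf{p}_u)-\textbf{p}_u\textbf{p}_u^{T})$ now vary with $u$, so one must show that over the shrinking box $I$ these covariances converge uniformly to the one at $\textbf{p}_1$, so that all Gaussian densities evaluated at $\overline{\textbf{Y}}_{\Pi(1)}$ differ only by $1+o(1)$ factors. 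Once this uniform approximation is in place, the counting-vs.-standard-deviation trade-off of Theorem~\ref{two_state_thm} carries over with the only change being that the critical exponent becomes $2/(r-1)$ rather than $2$.
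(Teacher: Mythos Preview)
Your core argument matches the paper's: the proof of Theorem~\ref{r_state_thm} is a sketch referring back to Appendix~\ref{sec:app}, and your shrinking box of side $l^{(n)}=n^{-1/(r-1)+\eta}$, Chebyshev count $|S|\sim n^{(r-1)\eta}\to\infty$, and ratio $\sigma/l^{(n)}\sim n^{\alpha/2-\eta}\to\infty$ are exactly the ingredients used there. (The rigorous version in the paper works with exact multinomial likelihood ratios rather than the Gaussian approximation, the analog of Lemma~\ref{lem3}, which is indeed the step you flag as the main obstacle.)

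Your closing step, however, contains a real slip. The claim ``conditional on $\Pi(1)=j\in S\setminus\{1\}$ the observation $\textbf{Y}^{(m)}$ is independent of $X_1(k)$'' is false: whatever value $\Pi(1)$ takes, $\textbf{Y}^{(m)}$ contains the column $\textbf{Y}_{\Pi(1)}^{(m)}=\textbf{X}_1^{(m)}$, so conditioning on the true $\Pi(1)$ never decouples $\textbf{Y}^{(m)}$ from $X_1(k)$, and the Fano-type bound $I\le(\log r)\,P(\Pi(1)\notin S\text{ or }\Pi(1)=1\mid\textbf{Y}^{(m)})$ does not follow. The paper closes the argument differently: after showing that the posterior weights $W_j^{(n)}=P(\Pi(1)=j\mid\textbf{Y}^{(m)},\Pi(J^{(n)}))$ satisfy $N^{(n)}W_j^{(n)}\to 1$ uniformly over $j\in\Pi(J^{(n)})$, it writes the posterior law of $X_1(k)$ as $\sum_{j\in\Pi(J^{(n)})}\mathbf{1}_{\{Y_j(k)=\cdot\}}\,W_j^{(n)}$ and applies the law of large numbers across the $N^{(n)}\to\infty$ independent columns (each with parameter vector $\to\textbf{p}_1$) to show this average converges to the prior $\textbf{p}_1$. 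That gives $H(X_1(k)\mid\textbf{Y}^{(m)},\Pi(J^{(n)}))\to H(X_1(k))$ and hence $I(X_1(k);\textbf{Y}^{(m)})\to 0$. Replace your last paragraph with this averaging step.
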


Proof of the Theorem~\ref{r_state_thm} is analogous to the proof of Theorem \ref{two_state_thm}. Here, we provide the general intuition. We do not provide the entire rigorous proof as it is for the most part repetition of the arguments provided for Theorem \ref{two_state_thm} in Appendix~\ref{sec:app}.

Let $d$ be equal to $r - 1$. As you can see in Figure~\ref{j_n}, for three locations there exists a set $J^{(n)}$ such that $\textbf{p}_1$ is in that set and we have
\[
  Vol(J^{(n)}) = (l^{(n)})^d.
\]
We choose $l^{(n)} = \frac{1}{n^{\frac{1}{d}-\eta}}$, where $\eta < \frac{\alpha}{2}$. Thus, the average number of users with $\textbf{p}$ vector in $J{(n)}$ is
\begin{align}
\no  n\frac{1}{\left(n^{\frac{1}{d}-\eta}\right)^d}= n^{d\eta} \rightarrow \infty \ \ \text{as} \ \ n \rightarrow \infty,
\end{align}
so we can guarantee that a large number of users are in the set $J^{(n)}$. This can be done exactly as in the proof of Theorem~\ref{two_state_thm} using Chebyshev's Inequality.

\begin{figure}
  \centering
  \includegraphics[width=.4\linewidth, height=0.38 \linewidth]{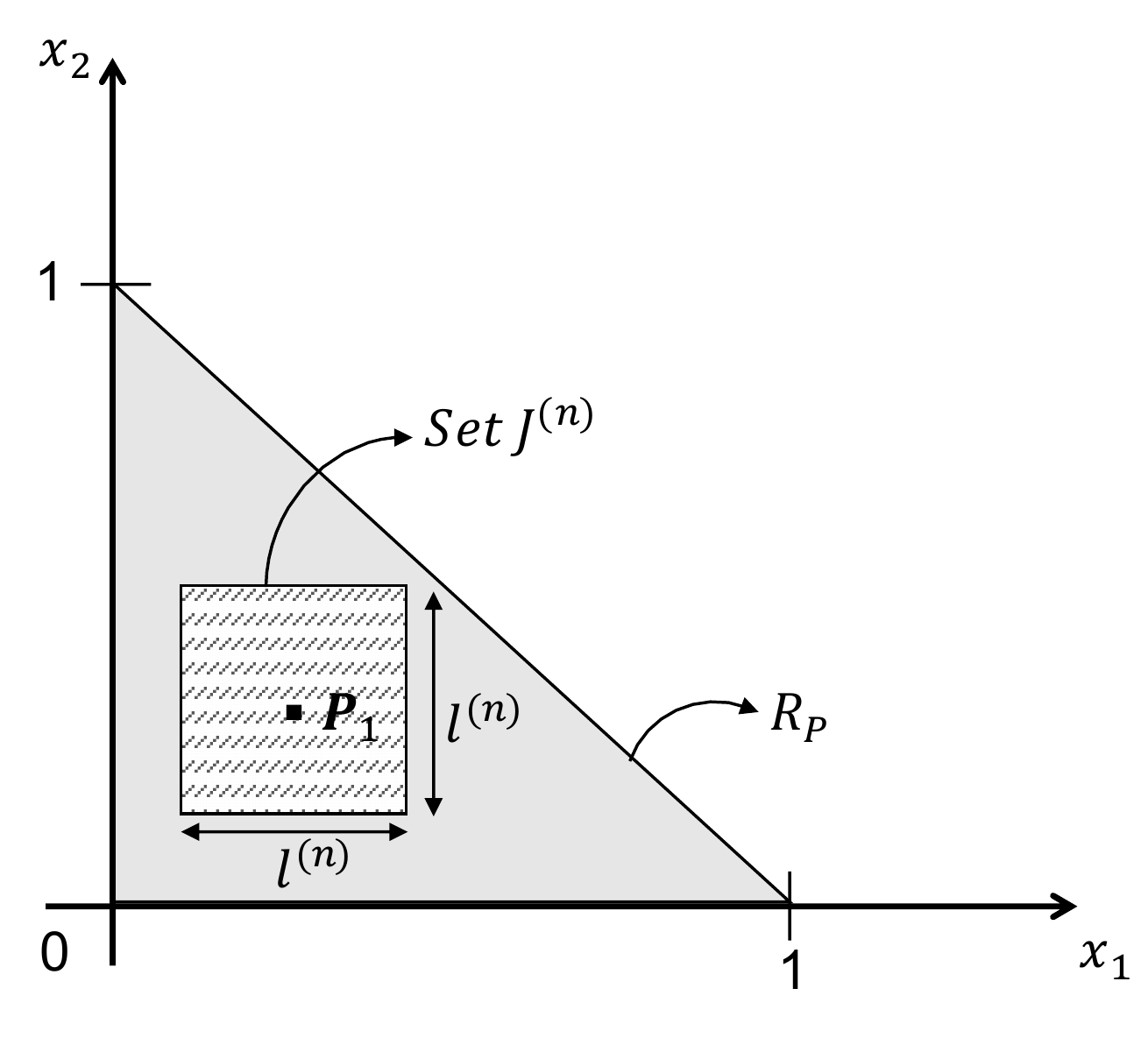}
  \centering
  \caption{$\textbf{p}_1 = ({p}_1(0), {p}_1(1), \cdots, {p}_1(r-1))$ is in set $J^{(n)}$ in $R_P$. }
  \label{j_n}
\end{figure}

Here, the number of times a user is at each location follows a multinomial distribution and in the long run, these numbers have a jointly gaussian distribution asymptotically as $n$ goes to infinity. The standard deviation of these variables are in the form of $ \frac{const}{\sqrt{m}}$. Moreover, the standard deviation over length of this interval is also large

\begin{align}
\no   \frac{\frac{const.}{\sqrt{m}}}{l^{(n)}}=\frac{const.\  n^{\frac{1}{d}-\eta}}{\sqrt{m}} \sim \frac{n^{\frac{1}{d}-\eta}}{(n^{\frac{2}{d}-\alpha})^{\frac{1}{2}}} = n^{\frac{\alpha}{2}-\eta} \rightarrow \infty  \ \ \text{as} \ \ n \rightarrow \infty.
\end{align}

Again, we have a large number of asymptotically jointly normal random variables that have a much larger standard deviation compared to the differences of their means. Thus, distinguishing between them becomes impossible.

This suggests that it is impossible for the adversary to correctly identify a user based on her observations even though he knows $\textbf{P}$ and $\textbf{Y}^{(m)}$. So, all the users have perfect location privacy. The proof can be made rigorous exactly the same way we did for the proof of Theorem~\ref{r_state_thm}, so we do not repeat the details here.

\section{Markov Chain Model}\label{sec:MC}
Assume there are $r$ possible locations to which users can go. We use a Markov chain with $r$ states to model movements of each user. We define $E$, the set of edges in this Markov chain, such that $(i,j)$ is in $E$ if there exists an edge from $i$ to $j$ with probability $p{(i,j)}>0$.

We assume that this Markov structure chain gives the movement pattern of each user and what differentiates between users is their transition probabilities. That is, for fixed locations $i$ and $j$, two different users could have two different transition probabilities. For simplicity, let's assume that all users start at location (state) $1$, i.e., $X_u(1)=1$ for all $u=1,2,\cdots$. This condition is not necessary and can be easily relaxed; however, we assume it here for the clarity of exposition. We now state and prove the theorem that gives the condition for perfect location privacy for a user in the above setting.
\begin{thm}\label{main}
For an irreducible, aperiodic Markov chain with $r$ states and $|E|$ edges, if $m = cn^{\frac{2}{|E|-r}-\alpha}$, where $c>0$ and $\alpha>0$ are constants, then
\begin{align}
\lim_{n \rightarrow \infty}{I(X_1(k) ; \textbf{Y}^{(m)})} = 0,\ \ \ \forall k \in \mathbb{N},
\end{align}
i.e., user 1 has perfect location privacy.
\end{thm}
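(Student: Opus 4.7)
The plan is to adapt the strategy used for Theorem~\ref{r_state_thm} to the Markov chain setting, with the $(r-1)$-dimensional probability simplex for each user replaced by the $(|E|-r)$-dimensional parameter set of transition probabilities. Each user $u$'s chain is characterized by $|E|$ transition probabilities on the edges of the graph, subject to $r$ row-sum constraints (one per state), leaving exactly $|E|-r$ free parameters. By analogy with the earlier setup, I would assume the vector $\theta_u \in R_{\mathbf{P}} \subset (0,1)^{|E|-r}$ of free parameters is drawn i.i.d.\ across users from a continuous density $f_{\mathbf{P}}$ satisfying $\delta_1 < f_{\mathbf{P}} < \delta_2$ on $R_{\mathbf{P}}$, and that the adversary knows all $\theta_u$. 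The relevant sufficient statistic for user $u$ extracted from $\mathbf{Y}_{\Pi(u)}^{(m)}$ is the collection of empirical transition counts $N_u(i,j)$, $(i,j)\in E$, and equivalently the maximum-likelihood estimators $\hat\theta_u$ formed from them.

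First I would prove a uniform $1/\sqrt{m}$-rate asymptotic normality for $\hat\theta_u$. Since the chain is irreducible and aperiodic, the ergodic theorem gives $N_u(i)/m \to \pi_u(i)$ with $\pi_u(i)$ uniformly bounded below on $R_{\mathbf{P}}$, and the standard Markov chain CLT yields
\begin{equation*}
\sqrt{m}\,(\hat\theta_u - \theta_u) \;\xrightarrow{d}\; \mathcal{N}\!\bigl(0,\Sigma(\theta_u)\bigr),
\end{equation*}
where $\Sigma(\theta_u)$ is the inverse Fisher information and is uniformly bounded and positive definite on $R_{\mathbf{P}}$. In particular, each coordinate of $\hat\theta_u$ has standard deviation $\Theta(1/\sqrt{m})$ uniformly in $\theta_u$. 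The fact that all users start at state~$1$ rather than at stationarity introduces only an $O(1)$ correction to the transition counts, which is negligible compared with the $\Theta(m)$ typical count and so does not affect the limit.

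Next I would run the confusability argument exactly as in the proof of Theorem~\ref{r_state_thm}. Choose $l^{(n)} = n^{-\frac{1}{|E|-r}+\eta}$ with $0<\eta<\alpha/2$, and let $J^{(n)}$ be an $(|E|-r)$-dimensional box of side $l^{(n)}$ centered at $\theta_1$. Then $n\cdot\mathrm{Vol}(J^{(n)}) = n^{(|E|-r)\eta}\to\infty$, so Chebyshev's inequality (used exactly as in Theorem~\ref{two_state_thm}) ensures that with high probability the number of users with $\theta_u \in J^{(n)}$ tends to infinity. The ratio of estimator standard deviation to box side length is
\begin{equation*}
\frac{1/\sqrt{m}}{l^{(n)}} \;=\; \frac{n^{\frac{1}{|E|-r}-\eta}}{\sqrt{c}\,n^{\frac{1}{|E|-r}-\alpha/2}} \;=\; \frac{1}{\sqrt{c}}\,n^{\alpha/2-\eta}\;\longrightarrow\;\infty,
\end{equation*}
so the asymptotic Gaussians of the $\hat\theta_u$ for users in $J^{(n)}$ blur together relative to their mean separation. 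Hence the adversary cannot reliably single out $\Pi(1)$ inside a set whose size grows without bound; the posterior mass on each candidate, and in particular on the correct index, tends to $0$. Since user $j\ne 1$ evolves independently of user $1$, conditioning on an incorrect guess gives no information about $X_1(k)$, and the same chain of estimates used to close the proof of Theorem~\ref{two_state_thm} yields $I(X_1(k);\mathbf{Y}^{(m)})\to 0$.

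The main obstacle is the first step: a clean, uniform Markov chain CLT for $\hat\theta_u$ together with uniform lower bounds on the stationary masses $\pi_u(i)$ and on the Fisher information over $R_{\mathbf{P}}$. Once that is established, the bookkeeping in the second and third steps is a direct dimensional analogue of the i.i.d.\ case, with $r-1$ replaced throughout by $|E|-r$; no new ideas beyond those in the appendix proof of Theorem~\ref{two_state_thm} are required.
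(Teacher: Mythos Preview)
Your proposal is correct and follows essentially the same route as the paper: identify the $d=|E|-r$ free transition-probability parameters, observe that the permuted transition-count matrices are a sufficient statistic for $\Pi^{(n)}$, and then run the confusability argument of Theorem~\ref{r_state_thm} with $r-1$ replaced by $|E|-r$. The one difference is in how the $\Theta(1/\sqrt{m})$ fluctuation scale is justified. The paper argues that, after the linear reduction $\mathbf{Q}_u=[\mathbf{P}_u,\mathbf{P}_u\mathbf{B}]$, the relevant counts $M_u(i,j)$ for $(i,j)\in E_d$ have multinomial distributions, so the problem is \emph{structurally identical} to the $(d{+}1)$-location i.i.d.\ model and Theorem~\ref{r_state_thm} applies as a black box. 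You instead invoke a Markov chain CLT for the MLE $\hat\theta_u$ and rerun the box-versus-standard-deviation computation from scratch. Your route is slightly more self-contained and is more explicit about the dependence among transition counts (the number of visits to each state is itself random, which the paper's ``multinomial'' claim leaves implicit), at the cost of needing the uniform CLT and Fisher-information bounds you flag as the main obstacle; the paper's route is shorter because it offloads all of that to the already-proved i.i.d.\ theorem.
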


\begin{proof}

Let $M_u(i,j)$ be the number of observed transitions from state $i$ to state $j$ for user $u$. We first show that $M_{\Pi(u)}(i,j)$'s provide a sufficient statistic for the adversary when the adversary's goal is to obtain the permutation $\Pi^{(n)}$. To make this statement precise, let's define $\mathbf{M}_u^{(m)}$ as the matrix containing $M_u(i,j)$'s for user $u$:
\begin{equation}
\nonumber \mathbf{M}_u^{(m)}=
\begin{bmatrix}
  M_u(1,1) & M_u(1,2) & \cdots & M_u(1,r) \\
  M_u(2,1) & M_u(2,2) & \cdots & M_u(2,r) \\
  \cdots & \cdots &  \cdots & \cdots \\
  M_u(r,1) & M_u(r,2) & \cdots & M_u(r,r)\\
\end{bmatrix}
\end{equation}
Also, let $\mathbf{M}^{(m)}$ be the ordered collection of $\mathbf{M}_u^{(m)}$'s. Specifically,
\[
  \mathbf{M}^{(m)} = \left(\mathbf{M}_1^{(m)}, \mathbf{M}_2^{(m)}, \cdots, \mathbf{M}_n^{(m)}\right)
\]
The adversary can obtain $\textrm{Perm} \big(\mathbf{M}^{(m)}, \Pi^{(n)} \big)$, a permuted version of $\mathbf{M}^{(m)}$. In particular, we can write
\begin{align}
\no\textrm{Perm} \big(\mathbf{M}^{(m)}, \Pi^{(n)} \big)&=\textrm{Perm}\left( \mathbf{M}_1^{(m)}, \mathbf{M}_2^{(m)}, \cdots, \mathbf{M}_n^{(m)}; \Pi^{(n)} \right) \\
\nonumber &=\left( \mathbf{M}_{\Pi^{-1}(1)}^{(m)}, \mathbf{M}_{\Pi^{-1}(2)}^{(m)}, \cdots,  \mathbf{M}_{\Pi^{-1}(n)}^{(m)}\right).
\end{align}

We now state a lemma that confirms $\textrm{Perm} \big(\mathbf{M}^{(m)}, \Pi^{(n)} \big)$ is a sufficient statistic for the adversary, when the adversary's goal is to recover $\Pi^{(n)}$. Remember that $\textbf{Y}^{(m)}$ is the collection of anonymized observations of users' locations available to the adversary.

\begin{lem}\label{lem-Markov-suff}
  Given $\textrm{Perm} \big(\mathbf{M}^{(m)}, \Pi^{(n)} \big)$, the random matrix $\textbf{Y}^{(m)}$ and the random permutation $\Pi^{(n)}$ are conditionally independent. That is
 \begin{align}\label{eq:sufficient}
 P\left(\Pi^{(n)}=\pi \ \ \bigg{ | } \ \ \textbf{Y}^{(m)}=\mathbf{y}, \textrm{Perm} \big(\mathbf{M}^{(m)}, \Pi^{(n)} \big)=\mathbf{m} \right)  =
 \\
 \no P\left(\Pi^{(n)}=\pi \ \ \bigg{ | } \ \ \textrm{Perm} \big(\mathbf{M}^{(m)}, \Pi^{(n)} \big)=\mathbf{s} \right)
\end{align}

\end{lem}
Lemma \ref{lem-Markov-suff} is proved in the Appendix B.

Next note that since the Markov chain is irreducible and aperiodic, when we are determining $p(i,j)$'s, there are $d$ degrees of freedom, where $d$ is equal to $|E| - r$. This is because for each states $i$, we must have
\begin{align}
\no \sum \limits_{j = 1}^{r} p(i,j) = 1.
\end{align}
Thus, the Markov chain of the user $u$ is completely determined by $d$ values of $p(i,j)$'s which we show as
\begin{align}
\no \textbf{P}_{u} = \left(p_u(1),p_u(2),\cdots,p_u(d)\right)
\end{align}
and $\textbf{P}_{u}$'s are known to the adversary for all users. Note that the choice of $\textbf{P}_{u}$ is not unique; nevertheless, as long as we fix a specific $\textbf{P}_{u}$, we can proceed with the proof.
We define $E_d$ as the set of $d$ edges whose $p(i,j)$'s belong to  $\textbf{P}_{u}$. Let $R_{\textbf{p}} \subset \mathbb{R}^{d}$ be the range of acceptable values for $\textbf{P}_{u}$. For example, in Figure \ref{ex1} we have $|E|=6$ and $r=3$, so we have three independent transitions probabilities. If we choose $p_1$, $p_2$, and $p_3$ according to the figure, we obtain the following region
\begin{align}
\no R_{\textbf{p}} =\{(p_1,p_2,p_3)  \in \mathbb{R}^3:
 0 \leq p_i \leq 1 \textrm{ for } i = 1,2,3 \textrm{ and }\\ \no p_1 + p_2 \leq 1\}.
\end{align}

\begin{figure}[H]
\begin{center}
\[
\SelectTips {lu}{12 scaled 1500}
\xymatrixcolsep{5pc}\xymatrixrowsep{5pc}\xymatrix{
*++[o][F-]{1}  \ar@(dl,ul)[]^{p_1}\ar@/^/[r]^{p_2}  \ar@/^0.7pc/[dr]|{1-p_1-p_2}
& *++[o][F-]{2} \ar@/^1pc/[d]^{1}
\\ & *++[o][F-]{3}  \ar@/^/[u]_{p_3}  \ar@/^1.2pc/[ul]|{1-p_3}
}
\]
\caption{Three states Markov chain example}
\label{ex1}
\end{center}
\end{figure}
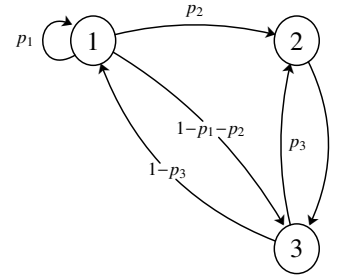
The statistical properties of each user are completely known to the adversary since she knows the Markov chain of each user. The adversary wants to be able to distinguish between users by having $m$ observations per user and also knowing $\textbf{P}_{u}$'s for all users.

In this model, we assume that $\textbf{P}_{u}$ for each user $u$ is drawn independently from a $d$-dimensional continuous density function, $f_{\textbf{P}}(\textbf{p})$. As before, we assume there exist positive constants $ \delta_1,\delta_2>0$, such that
\begin{equation}
\begin{cases}
\no    \delta_1 <f_{\textbf{P}}(\textbf{p}) <\delta_2 & \textbf{p} \in R_{\textbf{p}}\\
    f_{\textbf{P}}(\textbf{p})=0 &  \textbf{p} \notin R_{\textbf{p}}
\end{cases}
\end{equation}

We now claim that the adversary's position in this problem is mathematically equivalent to the the i.i.d model where the number of locations $r$ is equal to $d+1$ where $d = |E|-r$. First, note that since the Markov chain is irreducible and aperiodic, it has a unique stationary distribution which is equal to the limiting distribution. Next, define $\mathbf{Q}_u$ to be the vector consisting of all the transition probabilities of user $u$. In particular, based on the above argument, we can represent $\mathbf{Q}_u$ in the following way:
\begin{align}
\no \mathbf{Q}_u=[\mathbf{P}_{u}, \mathbf{P}_{u} \mathbf{B}],
\end{align}
where $\mathbf{B}$ is a non-random $d$ by $|E|-d$ matrix. Now, note that $\mathbf{P}_{u} \mathbf{B}$ is a non-random function of $\mathbf{P}_{u}$. In particular, if $M_u(i,j)$ shows the observed number transitions from state $i$ to state $j$ for user $u$, then we only need to know $M_u(i,j)$ for the edges in $E_d$, as the rest will be determined by the linear transform defined by $\mathbf{B}$. This implies that the decision problem for the adversary is reduced to the decision problem on transition probabilities in $\mathbf{P}_{u}$ and the adversary only needs to look at the $M_u(i,j)$'s for the edges in $E_d$. Now, this problem has exactly the same structure as the i.i.d model where the number of locations $r$ is equal to $d+1$ where $d = |E|-r$. In particular, $M_u(i,j)$'s have multinomial distributions and the statement of Theorem \ref{main} follows by applying Theorem \ref{r_state_thm}.

\end{proof}

\textit{Discussion:} One limitation of the above formulations is that all users must have the same Markov chain graphs with only different transition probabilities. In reality, there might be users that have different Markov chain graphs. For example, we might have users that never visit a specific region. Nevertheless, we can address this issue in the following way. If we are considering the location privacy of user $1$, we only consider users that have the same Markov chain graph (but with different transitions probabilities). The other users are easily distinguishable from user $1$ anyway. Now, $n$ would be the total number of this new set of users and again we can apply Theorem~\ref{main}. If $n$ is not large enough in this case, then we need to use location obfuscation techniques in addition to anonymization to achieve perfect privacy.


\section{Conclusion}

We presented an information theoretic definition for perfect location privacy using the mutual information between the users' actual locations and the anonymized observations that the adversary collects.
First, we modeled users' movements to be independent from their previous locations. In this model, we have $n$ users and $r$ locations. We prove that if the number of anonymized observations that the adversary collects, $m$, is less than $O(n^{\frac{2}{r-1}})$ then users will have perfect location privacy. So, if the anonymization method changes pseudonyms of users before $O(n^{\frac{2}{r-1}})$ observations is made by the adversary for each user, then the adversary cannot distinguish between users and they can achieve perfect location privacy.
Then, we modeled users' movements using Markov chains so that their current locations affect their next moves. We proved that for such a user, perfect location privacy is achievable if the pseudonym of the user is changed before $O(n^{\frac{2}{|E|-r}})$ number of observations is made by the adversary.

\appendices
\section{Proof of Theorem 1 (Perfect Location Privacy for Two-State Model)}\label{sec:app}
Here, we provide a formal proof for Theorem 1. In the proposed setting, we assume we have an infinite number of potential users indexed by integers, and at any step we consider a network consisting of  $n$ users, i.e., users $1, 2, \cdots, n$.  We would like to show perfect location privacy when $n$ goes to infinity. Remember that $X_u(t)$ shows the location of user $u$ at time $t$.

In a two-state model, let us assume we have state 0 and state 1. There is a sequence $p_1, p_2, p_3, \cdots$  for the users. In particular, for user $u$ we have
$p_u = P(X_u(k)=1)$ for times $k = 1,2,\cdots$.  Thus, the locations of each user $u$ are determined by a Bernoulli ($p_u$) process.

When we set $n \in \mathbb{N}$ as the number of users, we assume $m$ to be the number of adversary's observations per user,
\[m = m(n) = cn^{2-\alpha} \ \ \ \textrm{where} \ \ 0<\alpha<1.\]
So, we have $n \rightarrow \infty$ if and only if $m \rightarrow \infty$.

As defined previously, $\textbf{X}_u^{(m)}$ contains $m$ number of user $u$'s locations and $\textbf{X}^{(m)}$ is the collection of $\textbf{X}_u^{(m)}$'s for all users,
\[\textbf{X}_u^{(m)} = \begin{bmatrix}
       X_u(1) \\ X_u(2) \\ \vdots \\X_u(m) \end{bmatrix} , \ \ \  \textbf{X}^{(m)} =\left( \textbf{X}_{1}^{(m)}, \textbf{X}_{2}^{(m)}, \cdots,  \textbf{X}_{n}^{(m)}\right).
\]

The permutation function applied to anonymize users is $\Pi^{(n)}$ (or simply $\Pi$). For any set $A \subset \{1,2,\cdots,n\} $, we define \[\Pi(A) = \{\Pi(u) : u \in A\}.\]

The adversary who knows all the $p_u$'s, observes $n$ anonymized users for $m$ number of times each and collects their locations in $\textbf{Y}^{(m)}$
\begin{align}
\no \textbf{Y}^{(m)} &=\textrm{Perm}\left(\textbf{X}_{1}^{(m)}, \textbf{X}_{2}^{(m)}, \cdots,  \textbf{X}_{n}^{(m)}; \Pi \right) \\
\nonumber &=\left( \textbf{Y}_{1}^{(m)}, \textbf{Y}_{2}^{(m)}, \cdots, \textbf{Y}_{n}^{(m)}\right ) \ \
\end{align}
where $\textbf{Y}_{u}^{(m)} = \textbf{X}_{\Pi^{-1}(u)}^{(m)} , \textbf{Y}_{\Pi(u)}^{(m)} = \textbf{X}_{u}^{(m)}$. 

Based on the assumptions of Theorem~\ref{two_state_thm}, if the following holds
\begin{enumerate}
 \item $m = cn^{2 -  \alpha}$,
 which $c>0, 0<\alpha<1$ and are constant
\item$p_1 \in (0,1)$
\item $(p_2, p_3, \cdots, p_n) \sim f_P $, $0<\delta_1<f_P<\delta_2$
\item $P = (p_1, p_2,\cdots, p_n)$ be known to the adversary,
\end{enumerate}
then we want to show
\begin{align}
\no \forall  k \in \mathbb{N},\ \ \  \lim_{n \rightarrow \infty} I\left(X_1(k) ; \textbf{Y}^{(m)}\right) = 0
\end{align}
i.e., user 1 has perfect location privacy and the same applies for all other users.

\subsection{Proof procedure}
Steps of the proof are as follows:
\begin{enumerate}
\item {We show that there exists a sequence of sets $J^{(n)} \subseteq \{1,2,\cdots,n\}$ with the following properties:
\begin{itemize}
  \item $1 \in J^{(n)}$
  \item if $N^{(n)} = |J^{(n)}|$ then, $N^{(n)} \rightarrow \infty$ as $n \rightarrow \infty$
  \item let $\{j_n\}_{n=1}^\infty$ be any sequence such that $j_n \in \Pi(J^{(n)})$ then
\begin{align}
\no    P\left(\Pi(1) = j_n |  \textbf{Y}^{(m)} , \Pi(J^{(n)})\right) \rightarrow 0
\end{align}

\end{itemize}
\item
{
We show that \[
  X_1(k)|\textbf{Y}^{(m)} , \Pi(J^{(n)}) \xrightarrow{d} Bernoulli(p_1).
  \]
}
\item
{
Using 2, we conclude \[
\no H\left(X_1(k)  |  \textbf{Y}^{(m)},  \Pi(J^{(n)})\right) \rightarrow H\left(X_1(k)\right)
\] and in conclusion,
\begin{align}
\no I\left(X_1(k)  ; \textbf{Y}^{(m)}\right) \rightarrow 0.
\end{align}
}
}
\end{enumerate}

\subsection{Detail of the proof}
We define $S_u^{(m)}$ for $u = 1,2,\cdots, n$ to be the number of times that user $u$ was at state 1,
\[
  S_u^{(m)} = X_u(1) + X_u(2) + \cdots +X_u(m).
\]
 Based on the assumptions, we have $S_u^{(m)} \sim Binomial(m,p_u)$. One benefit of $S_u^{(m)}$'s is that they provide a sufficient statistic for the adversary when the adversary's goal is to obtain the permutation $\Pi^{(n)}$. To make this statement precise, let's define $\mathbf{S}^{(m)}$ as the vector containing $S_u^{(m)}$, for $u=1,2,\cdots, n$:
\[
  \mathbf{S}^{(m)} = \left(S_1^{(m)}, S_2^{(m)}, \cdots, S_n^{(m)}\right)
\]
Note that
\begin{align}
\no S_{\Pi(u)}^{(m)} &= X_{\Pi(u)}(1) + X_{\Pi(u)}(2) + \cdots +X_{\Pi(u)}(m)\\
\no &= Y_{u}(1)+ Y_{u}(2)+ \cdots +Y_{u}(m)\ \ \ \ \textrm{ for }u=1, 2, \cdots, n.
\end{align}
Thus, the adversary can obtain $\textrm{Perm} \big(\mathbf{S}^{(m)}, \Pi^{(n)} \big)$, a permuted version of $\mathbf{S}^{(m)}$, by adding the elements in each column of $\textbf{Y}^{(m)}$. In particular, we can write
\begin{align}
\no\textrm{Perm} \big(\mathbf{S}^{(m)}, \Pi^{(n)} \big)&=\textrm{Perm}\left( S_1^{(m)}, S_2^{(m)}, \cdots, S_n^{(m)}; \Pi^{(n)} \right) \\
\nonumber &=\left( S_{\Pi^{-1}(1)}^{(m)}, S_{\Pi^{-1}(2)}^{(m)}, \cdots,  S_{\Pi^{-1}(n)}^{(m)}\right).
\end{align}

We now state and prove a lemma that confirms $\textrm{Perm} \big(\mathbf{S}^{(m)}, \Pi^{(n)} \big)$ is a sufficient statistic for the adversary when the adversary's goal is to recover $\Pi^{(n)}$. The usefulness of this lemma will be clear since we can use the law of total probability to break the adversary's decision problem into two steps of (1) obtaining the posterior probability distribution for $\Pi^{(n)}$ and (2) estimating the locations $X_u(k)$ given the choice of $\Pi^{(n)}$.

\begin{lem}\label{lem0}
  Given $\textrm{Perm} \big(\mathbf{S}^{(m)}, \Pi^{(n)} \big)$, the random matrix $\textbf{Y}^{(m)}$ and the random permutation $\Pi^{(n)}$ are conditionally independent. That is
 \begin{align}\label{eq:sufficient}
 P\left(\Pi^{(n)}=\pi \ \ \bigg{ | } \ \ \textbf{Y}^{(m)}=\mathbf{y}, \textrm{Perm} \big(\mathbf{S}^{(m)}, \Pi^{(n)} \big)=\mathbf{s} \right)  =
 \\ \no P\left(\Pi^{(n)}=\pi \ \ \bigg{ | } \ \ \textrm{Perm} \big(\mathbf{S}^{(m)}, \Pi^{(n)} \big)=\mathbf{s} \right)
\end{align}

\end{lem}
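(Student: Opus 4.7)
The plan is to prove this sufficient-statistic identity by a direct Bayes-rule computation, exploiting the well-known fact that for an i.i.d.\ Bernoulli$(p)$ sample of length $m$, the sum is a sufficient statistic for $p$. First I would observe that $\textrm{Perm}(\mathbf{S}^{(m)}, \Pi^{(n)})$ is a deterministic function of $\textbf{Y}^{(m)}$: its $u$-th component equals the column sum $\sum_k Y_u(k)$. Hence, once we condition on $\textbf{Y}^{(m)} = \mathbf{y}$, the event $\{\textrm{Perm}(\mathbf{S}^{(m)},\Pi^{(n)}) = \mathbf{s}\}$ is automatically determined and adds no information; the entire content of the lemma is therefore the identity $P(\Pi^{(n)} = \pi \mid \textbf{Y}^{(m)} = \mathbf{y}) = P(\Pi^{(n)} = \pi \mid \textrm{Perm}(\mathbf{S}^{(m)},\Pi^{(n)}) = \mathbf{s})$ whenever $\mathbf{s}$ is consistent with the column sums of $\mathbf{y}$.

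To establish that identity, I would compute the two conditional likelihoods under $\Pi^{(n)}=\pi$. Using $\textbf{Y}^{(m)}_{\pi(u)}=\textbf{X}^{(m)}_u$ together with independence and the Bernoulli$(p_u)$ law of the entries of $\textbf{X}^{(m)}_u$ gives
\[
P\left(\textbf{Y}^{(m)}=\mathbf{y} \mid \Pi^{(n)}=\pi\right) = \prod_{u=1}^n p_u^{s_{\pi(u)}}(1-p_u)^{m-s_{\pi(u)}},
\]
where $s_v = \sum_k y_v(k)$. Because $S_v^{(m)} \sim \textrm{Binomial}(m,p_v)$ and $\textrm{Perm}(\mathbf{S}^{(m)},\Pi^{(n)})_u = S^{(m)}_{\pi^{-1}(u)}$, a parallel calculation yields
\[
P\left(\textrm{Perm}(\mathbf{S}^{(m)},\Pi^{(n)})=\mathbf{s} \mid \Pi^{(n)}=\pi\right) = \prod_{u=1}^n \binom{m}{s_{\pi(u)}} p_u^{s_{\pi(u)}}(1-p_u)^{m-s_{\pi(u)}}.
\]

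The crucial observation is then that the ratio of these two expressions equals $\prod_u \binom{m}{s_{\pi(u)}}^{-1}$, which, because $\pi$ is a bijection, simplifies to $\prod_u \binom{m}{s_u}^{-1}$ and hence does not depend on $\pi$. Multiplying through by the uniform prior $P(\Pi^{(n)}=\pi) = 1/n!$ and summing over $\pi$ to recover the marginals $P(\textbf{Y}^{(m)}=\mathbf{y})$ and $P(\textrm{Perm}(\mathbf{S}^{(m)},\Pi^{(n)})=\mathbf{s})$ then yields the claimed equality of posteriors. There is no serious obstacle in this argument—it is essentially the Fisher--Neyman factorization specialized to the Bernoulli family—and the only point needing care is the re-indexing step showing that $\prod_u \binom{m}{s_{\pi(u)}} = \prod_u \binom{m}{s_u}$ under the permutation $\pi$.
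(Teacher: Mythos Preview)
Your proposal is correct and follows essentially the same approach as the paper's own proof: both reduce the identity to $P(\Pi^{(n)}=\pi\mid\textbf{Y}^{(m)}=\mathbf{y})=P(\Pi^{(n)}=\pi\mid\textrm{Perm}(\mathbf{S}^{(m)},\Pi^{(n)})=\mathbf{s})$ via the observation that $\textrm{Perm}(\mathbf{S}^{(m)},\Pi^{(n)})$ is a function of $\textbf{Y}^{(m)}$, and then compute the two likelihoods under $\Pi^{(n)}=\pi$ directly, using the bijection re-indexing $\prod_u\binom{m}{s_{\pi(u)}}=\prod_k\binom{m}{s_k}$ to cancel the binomial coefficients from numerator and denominator. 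The only stylistic difference is that the paper writes out both posteriors in full and compares the final expressions, whereas you phrase the cancellation as a $\pi$-independent likelihood ratio (Fisher--Neyman factorization); the content is the same.
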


\begin{proof}

Remember
\begin{align}
\nonumber \textbf{Y}^{(m)} &=\textrm{Perm}\left( \textbf{X}_{1}^{(m)}, \textbf{X}_{2}^{(m)}, \cdots,  \textbf{X}_{n}^{(m)}; \Pi^{(n)} \right) \\
\nonumber &=\left( \textbf{X}_{\Pi^{-1}(1)}^{(m)}, \textbf{X}_{\Pi^{-1}(2)}^{(m)}, \cdots,  \textbf{X}_{\Pi^{-1}(n)}^{(m)}\right).
\end{align}
Note that $\textbf{Y}^{(m)}$ (and therefore $\mathbf{y}$) is an $m$ by $n$ matrix, so we can write
\begin{align}
\nonumber \mathbf{y} &=\left( \textbf{y}_{1}, \textbf{y}_{2}, \cdots,  \textbf{y}_{n} \right),
\end{align}
where for $u=1,2,\cdots, n$, we have
\[\mathbf{y}_u = \begin{bmatrix}
\nonumber       y_u(1) \\ y_u(2) \\ \vdots \\y_u(m) \end{bmatrix}.
\]
Also, $\mathbf{s}$ is a $1$ by $n$ vector, so we can write

\begin{align}
\nonumber \mathbf{s}&=\left( s_{1}, s_{2}, \cdots, s_{n} \right).
\end{align}

We now show that the two sides of Equation \ref{eq:sufficient} are equal. The right hand side probability can be written as

 \begin{align}
 \no P\left(\Pi^{(n)}=\pi \ \ \bigg{ | } \ \ \textrm{Perm} \big(\mathbf{S}^{(m)}, \Pi^{(n)}\big)=\mathbf{s} \right)=\ \ \ \ \ \ \ \ \ \ \ \ \ \ \ \ \\
  \no \frac{P\left( \textrm{Perm} \big(\mathbf{S}^{(m)}, \Pi^{(n)}\big)=\mathbf{s} \ \ \bigg{ | } \ \ \no \Pi^{(n)}=\pi  \right) P\left( \Pi^{(n)}=\pi  \right)}{P \bigg(\textrm{Perm} \big(\mathbf{S}^{(m)}, \Pi^{(n)}\big)=\mathbf{s}\bigg)}=&\\
\no \frac{P\left( \textrm{Perm} \big(\mathbf{S}^{(m)}, \pi \big)=\mathbf{s} \ \ \bigg{ | } \ \ \no \Pi^{(n)}=\pi  \right)} { n! P \bigg(\textrm{Perm} \big(\mathbf{S}^{(m)},\Pi^{(n)}\big)=\mathbf{s} \bigg)}=& \ \ \ \ \ \\
\no \frac{P\left( \textrm{Perm} \big(\mathbf{S}^{(m)}, \pi \big)=\mathbf{s} \right)} { n! P \bigg(\textrm{Perm} \big(\mathbf{S}^{(m)},\Pi^{(n)}\big)=\mathbf{s} \bigg)}.
\end{align}
Now note that
 \begin{align}
 \no P\left( \textrm{Perm} \big(\mathbf{S}^{(m)}, \pi \big)=\mathbf{s} \right) &=P \left( \bigcap_{j=1}^{n} \left( S_{\pi^{-1}(j)}^{(m)}=s_j\right) \right)\\
 \no &=P \left( \bigcap_{u=1}^{n} \left( S_{u}^{(m)}=s_{\pi (u)}\right) \right)\\
 \no &=\prod_{u=1}^{n} P\left( S_{u}^{(m)}=s_{\pi (u)}\right)\\
 \no &= \prod_{u=1}^{n}   {m \choose s_{\pi (u)}} p_u^{s_{\pi (u)}}(1-p_u)^{m-s_{\pi (u)}}\\
 \no &= \prod_{k=1}^{n}   {m \choose s_k} \prod_{u=1}^{n}   p_u^{s_{\pi (u)}}(1-p_u)^{m-s_{\pi (u)}}
\end{align}
Similarly, we obtain
\begin{align}
 \no P \bigg(\textrm{Perm} \big(\mathbf{S}^{(m)},\Pi^{(n)}\big)=\mathbf{s} \bigg) = \ \ \ \ \ \ \ \ \ \ \ \ \ \ \ \ \
 \\ \no \sum_{\textrm{all permutations }\pi'} P\left( \textrm{Perm} \big(\mathbf{S}^{(m)}, \pi' \big)=\mathbf{s} \bigg{ | }  \no \Pi^{(n)}=\pi'  \right) P\left( \Pi^{(n)}=\pi'  \right)\\
 \no = \frac{1}{n!} \sum_{\textrm{all permutations }\pi'}  \prod_{k=1}^{n}   {m \choose s_k}  \prod_{u=1}^{n}  p_u^{s_{\pi' (u)}}(1-p_u)^{m-s_{\pi' (u)}}\\
 \no = \frac{1}{n!} \prod_{k=1}^{n}   {m \choose s_k}  \sum_{\textrm{all permutations }\pi'}   \prod_{u=1}^{n}  p_u^{s_{\pi' (u)}}(1-p_u)^{m-s_{\pi' (u)}}.
\end{align}
Thus, we conclude that the right hand side of Equation \ref{eq:sufficient} is equal to
 \begin{align}
 \no  \frac{\prod_{u=1}^{n}   p_u^{s_{\pi (u)}}(1-p_u)^{m-s_{\pi (u)}}} {\sum_{\textrm{all permutations }\pi'}   \prod_{u=1}^{n}  p_u^{s_{\pi' (u)}}(1-p_u)^{m-s_{\pi' (u)}}}.
\end{align}

Now let's look at the left hand side of Equation \ref{eq:sufficient}. First, note that in the left hand side probability in Equation \ref{eq:sufficient} we must have
\begin{align}\label{eq:y-s}
 s_u&=\sum_{k=1}^{m} y_u(k) \ \ \ \ \textrm{ for }u=1, 2, \cdots, n.
\end{align}
Next, we can write
\begin{align}
 \no P\left(\Pi^{(n)}=\pi \ \ \bigg{ | } \ \ \textbf{Y}^{(m)}=\mathbf{y}, \textrm{Perm} \big(\mathbf{S}^{(m)}, \Pi^{(n)} \big)=\mathbf{s} \right)  =
 \\ \no P\left(\Pi^{(n)}=\pi \ \ \bigg{ | } \ \ \textbf{Y}^{(m)}=\mathbf{y} \right).
\end{align}
This is because $\textrm{Perm} \big(\mathbf{S}^{(m)}, \Pi^{(n)} \big)$ is a function of $\textbf{Y}^{(m)}$. We have

 \begin{align}
 \no P\left(\Pi^{(n)}=\pi \ \ \bigg{ | } \ \ \textbf{Y}^{(m)}=\mathbf{y} \right)=\ \ \ \ \ \ \
 \\ \no \frac{ P\left(\textbf{Y}^{(m)}=\mathbf{y}\ \ \bigg{ | } \ \  \Pi^{(n)}=\pi   \right) P\left(  \Pi^{(n)}=\pi \right)}{P\left(\textbf{Y}^{(m)}=\mathbf{y} \right)}\ \ \ \ \ \
\end{align}

We have
 \begin{align}
 \no P\left(\textbf{Y}^{(m)}=\mathbf{y}\ \ \bigg{ | } \ \  \Pi^{(n)}=\pi   \right) \ \ \ \ \ \ \ \ \ \ \ \ \ \ \ \ \ \ \ \ \ \ \ \ \ \ \ \
 \\ =\no \prod_{u=1}^{n} p_u^{\sum_{k=1}^{m} y_{\pi(u)}(k)} (1-p_u)^{m-\sum_{k=1}^{m} y_{\pi(u)}(k)}  \ \ \ \ \ \ \ \ \\
 \no  =\prod_{u=1}^{n} p_u^{s_{\pi(u)}} (1-p_u)^{m-s_{\pi(u)}} \ \ \textrm{Using Euqation (\ref{eq:y-s})}
\end{align}
Similarly, we obtain
 \begin{align}
 \no P\left(\textbf{Y}^{(m)}=\mathbf{y} \right)&= \frac{1}{n!}
  \sum_{\textrm{all permutations }\pi'} \prod_{u=1}^{n} p_u^{s_{\pi'(u)}} (1-p_u)^{m-s_{\pi'(u)}}
\end{align}
Thus, we conclude that the left hand side of Equation \ref{eq:sufficient} is equal to
 \begin{align}
 \no  \frac{\prod_{u=1}^{n}   p_u^{s_{\pi (u)}}(1-p_u)^{m-s_{\pi (u)}}} {\sum_{\textrm{all permutations }\pi'}   \prod_{u=1}^{n}  p_u^{s_{\pi' (u)}}(1-p_u)^{m-s_{\pi' (u)}}},
\end{align}
which completes the proof.

\end{proof}

Next, we need to turn our attention to defining the critical set $J^{(n)}$. First, remember that
 \[m = cn^{2-\alpha}  \ \ \ \ \textrm{where} \ \ \ \ 0<\alpha<1.\]
We choose real numbers $\theta$ and $\phi$ such that
$
  0<\theta<\phi<\frac{\alpha}{2-\alpha}
$, and define
\begin{align}
\no  \epsilon_m \triangleq \frac{1}{m^{\frac{1}{2}+\phi}} \ \ \ \beta_m \triangleq \frac{1}{m^{\frac{1}{2}-\theta}}.
\end{align}
We now define the set $J^{(n)}$ for any positive integer $n$ as follows: Set $J^{(n)}$ consists of the indices of users such that the probability of them being at state 1 is within a range with $\epsilon_m$ difference around $p_1$,
\[
  J^{(n)} = \{  i \in \{1, 2, \dots, n\} : p_1-\epsilon_m <p_i<p_1+\epsilon_m\}.
\]
Clearly for all $n, 1 \in J^{(n)} $. The following lemma confirms that the number of elements in $J^{(n)}$ goes to infinity as $n \rightarrow \infty$.

\begin{lem}
  \label{lem1}
  If $N^{(n)} \triangleq |J^{(n)}|$, then $N^{(n)} \rightarrow \infty$ as $n \rightarrow \infty$.
  More specifically, as $n \rightarrow \infty$,
  \[
    \exists \lambda , c''>0: \ \ \ P(N^{(n)} > c''n^\lambda) \rightarrow 1 \hspace{20pt} \textrm{ as } \hspace{5pt}n \rightarrow \infty.
  \]
\end{lem}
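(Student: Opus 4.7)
The plan is to apply the second-moment (Chebyshev) method, exploiting the fact that $N^{(n)}$ is a deterministic ``$1$'' (for user~$1$) plus a sum of $n-1$ independent Bernoulli indicators of the event ``$p_i$ lands in the $\epsilon_m$-window around $p_1$''. Concretely, I would write
\[
N^{(n)} \;=\; 1 + \sum_{i=2}^{n} Z_i, \qquad Z_i := \mathbf{1}\{p_1 - \epsilon_m < p_i < p_1 + \epsilon_m\},
\]
so that $Z_2, \ldots, Z_n$ are i.i.d.\ Bernoulli$(q_m)$ for $q_m := P(Z_2=1)$. Since $p_1 \in (0,1)$ and $\epsilon_m \to 0$, for all $n$ large enough the window lies inside $(0,1)$, and the density bound $\delta_1 < f_P < \delta_2$ immediately yields $2\delta_1 \epsilon_m \leq q_m \leq 2\delta_2 \epsilon_m$.

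Next I would convert this into a power of $n$ by substituting $\epsilon_m = m^{-(1/2+\phi)}$ and $m = cn^{2-\alpha}$: this gives $q_m = \Theta\bigl(n^{-(2-\alpha)(1/2+\phi)}\bigr)$, and hence
\[
E[N^{(n)}] \;\geq\; 2\delta_1 (n-1)\,\epsilon_m \;=\; \Theta(n^{\lambda}), \qquad \lambda \;:=\; \tfrac{\alpha}{2} - (2-\alpha)\phi.
\]
The one non-automatic calculation in the proof is verifying that the prescribed range for $\phi$ (small relative to $\alpha$) indeed forces $\lambda > 0$; granted this, $E[N^{(n)}]$ diverges at a polynomial rate.

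Finally, since the $Z_i$ are independent, $\var(N^{(n)}) = (n-1)\,q_m(1-q_m) \leq (n-1)q_m \leq (\delta_2/\delta_1)\,E[N^{(n)}]$, and Chebyshev's inequality delivers
\[
P\!\left(N^{(n)} \leq \tfrac{1}{2} E[N^{(n)}]\right) \;\leq\; \frac{4\,\var(N^{(n)})}{E[N^{(n)}]^2} \;=\; O\!\left(\frac{1}{E[N^{(n)}]}\right) \;\longrightarrow\; 0.
\]
Taking $c''$ to be any constant strictly smaller than half the leading coefficient in the lower bound on $E[N^{(n)}]$ then yields $P(N^{(n)} > c'' n^\lambda) \to 1$, which is exactly the conclusion. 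I do not anticipate any serious obstacle: the heavy lifting (mutual independence of $p_2,\dots,p_n$, the two-sided density regularity, and the elementary variance bound for a sum of Bernoullis) is already furnished by the hypotheses, and the proof amounts to a clean exponent computation followed by a standard concentration estimate.
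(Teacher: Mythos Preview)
Your proposal is correct and follows essentially the same route as the paper: bound the hitting probability of the $\epsilon_m$-window by $2\delta_1\epsilon_m\le q_m\le 2\delta_2\epsilon_m$, compute the resulting exponent $\lambda=\tfrac{\alpha}{2}-(2-\alpha)\phi$ for $E[N^{(n)}]$, and finish with Chebyshev. The only cosmetic differences are that the paper writes $N^{(n)}\sim\mathrm{Binomial}(n,2\epsilon_m\delta)$ for an intermediate $\delta\in(\delta_1,\delta_2)$ rather than separating the deterministic ``$+1$'' for user~$1$ as you do, and it asserts $\lambda>0$ directly from the standing constraint on $\phi$ rather than flagging it as a verification step.
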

\begin{proof} Remember that we assume $p_u$'s are drawn independently from some continuous density function, $f_P(p)$, on the $(0,1)$ interval which satisfies
\begin{equation}
\no\begin{cases}
    \delta_1 <f_P(p) <\delta_2 & p \in (0,1)\\
    f_P(p)=0 &  p \notin (0,1)
\end{cases}
\end{equation}
  So given $p_1 \in (0,1)$, for $n$ large enough (so that $\epsilon_m$ is small enough), we have
  \[
    P( p_1-\epsilon_m <p_i<p_1+\epsilon_m )  = \int_{p_1-\epsilon_m}^{ p_1+\epsilon_m}f_P(p) dp,
    \]
    so we can conclude that
  \[
  2\epsilon_m\delta_1<P( p_1-\epsilon_m <p_i<p_1+\epsilon_m ) <2\epsilon_m\delta_2.
  \]
  We can find a $\delta$ such that $\delta_1<\delta<\delta_2$ and
\[
P( p_1-\epsilon_m <p_i<p_1+\epsilon_m ) = 2\epsilon_m\delta.
\]
Then, we can say that $N^{(n)} \sim Binomial(n,2\epsilon_m\delta)$, where
\[ \epsilon_m = \frac{1}{m^{\frac{1}{2}+\phi}}  =  \frac{1}{({cn^{2-\alpha}})^{(\frac{1}{2}+\phi)}}. \]
The expected value of $N^{(n)}$ is $n2\epsilon_m\delta$, and by substituting $\epsilon_m$ we get
\[
  E[N^{(n)}] = n2\epsilon_m\delta = \frac{n2\delta}{({c'n^{2-\alpha}})^{(\frac{1}{2}+\phi)}} = c''n^{(\frac{\alpha}{2}+\alpha\phi-2\phi)}.
\]
Let us set $\lambda = \frac{\alpha}{2}+\alpha\phi-2\phi$.
Since $\phi < \frac{\alpha}{2-\alpha}$ , we have $\lambda > 0$. Therefore, we can write
\[
  E[N^{(n)}] = c''n^\lambda,
\]
\[
  Var(N^{(n)}) = n(2\epsilon_m\delta)(1-2\epsilon_m\delta) \rightarrow n^\lambda (1+o(1)).
\]
Using Chebyshev's inequality
\[
  P(|N^{(n)} -   E[N^{(n)}] | > \frac{c''}{2}n^\lambda) < \frac{n^\lambda(1+o(1))}{\frac{c''^2}{4}n^{2\lambda}} \rightarrow 0
\]
\[
  P(N^{(n)} > \frac{c''}{2}n^\lambda) \rightarrow 1 \ \ \ \ \ \textrm{as}\ \ n \rightarrow \infty.
\]
\end{proof}

The next step in the proof is to show that users that are identified by the set $J^{(n)}$ produce a very similar moving process as user 1. To make this statement precise, we provide the following definition. Define the set $A^{(m)}$ as the interval in $\mathbb{R}$ consisting of real numbers which are within the $m\beta_m$ distance from $mp_1$ (the expected number of times that user 1 is at state 1 during the $m$ number of observations),
\[
 A^{(m)} = \{x \in R , m(p_1 -\beta_m)  \leq x  \leq m(p_1 +\beta_m)\}.
\]
\begin{lem}\label{lem2} We have
\[  P\left(\bigcap_{j \in J^{(n)}} \left( S_j^{(m)} \in A^{(m)} \right) \right) \rightarrow 1 \ \ \ \ \textrm{as}\ \ n \rightarrow \infty
\]
\end{lem}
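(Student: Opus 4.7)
My plan is to rewrite the target event as the complement of a union of "bad" events, bound each one by Chebyshev's inequality, and then control the union via a first-moment argument that handles the randomness of $J^{(n)}$ automatically.

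First, I would compare the two length scales. Each $S_j^{(m)} \sim \text{Binomial}(m, p_j)$ has mean $mp_j$, and for every $j \in J^{(n)}$ the defining condition gives $|mp_j - mp_1| < m\epsilon_m = m^{1/2-\phi}$, while the half-width of the window $A^{(m)}$ is $m\beta_m = m^{1/2+\theta}$. Since $\theta,\phi > 0$, we have $\epsilon_m \ll \beta_m$ for large $m$, so by the triangle inequality the bad event $S_j^{(m)} \notin A^{(m)}$ forces $|S_j^{(m)} - mp_j| > m\beta_m - m\epsilon_m \ge \tfrac{1}{2}m\beta_m$ once $n$ is large. Combined with $\text{Var}(S_j^{(m)}) = mp_j(1-p_j) \le m/4$, Chebyshev yields
\[
P\!\left(S_j^{(m)} \notin A^{(m)} \ \big|\ p_j \in (p_1-\epsilon_m,\, p_1+\epsilon_m)\right) \;\le\; \frac{m/4}{(m\beta_m/2)^2} \;=\; \frac{1}{m\beta_m^2} \;=\; \frac{1}{m^{2\theta}}.
\]

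Next, rather than conditioning on $N^{(n)}$, I would apply Markov's inequality to the count $Z^{(n)} = \sum_{j=1}^n \mathbf{1}\{j \in J^{(n)},\, S_j^{(m)} \notin A^{(m)}\}$, since the desired event $\bigcap_{j \in J^{(n)}}\{S_j^{(m)} \in A^{(m)}\}$ is exactly $\{Z^{(n)} = 0\}$. Using the bound on $P(p_j$ lies in the small interval$) \le 2\epsilon_m \delta_2$ from the density assumption on $f_P$, together with the conditional Chebyshev bound above, a law-of-total-expectation computation gives
\[
\mathbb{E}[Z^{(n)}] \;\le\; \frac{1}{m^{2\theta}} + (n-1)\cdot \frac{2\epsilon_m \delta_2}{m^{2\theta}} \;=\; O\!\left(\frac{n\epsilon_m}{m^{2\theta}}\right) \;=\; O\!\left(n^{\lambda - 2\theta(2-\alpha)}\right),
\]
with $\lambda = \alpha/2 - (2-\alpha)\phi$ as in Lemma~\ref{lem1}. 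Markov's inequality then forces $P(Z^{(n)} \ge 1) \to 0$, which is exactly the complement of the event in the lemma.

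The main obstacle is the last exponent inequality: one needs to know that the implicit selection of $\theta$ and $\phi$ in the range $0 < \theta < \phi < \alpha/(2-\alpha)$ is tight enough so that $\lambda < 2\theta(2-\alpha)$, equivalently $\phi + 2\theta > \alpha/(2(2-\alpha))$. This is a parameter-selection check rather than a fresh estimate: one first picks $\phi$ close enough to its upper bound that $\lambda$ is small and positive (as already required in Lemma~\ref{lem1}), and then picks $\theta$ close enough to $\phi$ that $2\theta(2-\alpha)$ exceeds $\lambda$. A secondary care point is that the events $\{j \in J^{(n)}\}$ and $\{S_j^{(m)} \notin A^{(m)}\}$ are correlated through the shared $p_j$, but working with $\mathbb{E}[Z^{(n)}]$ and the tower property sidesteps this cleanly without any independence claims on the summands.
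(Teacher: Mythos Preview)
Your argument is correct, but it takes a genuinely different route from the paper's proof. The paper proves the lemma via large deviations: it invokes Sanov's theorem to bound $P\bigl(S_j^{(m)} > m(p_1+\beta_m)\bigr)$ by $(m+1)\,2^{-mD(\text{Bernoulli}(p_1+\beta_m)\,\|\,\text{Bernoulli}(p_j))}$, expands the KL divergence to second order to obtain an exponent of order $m^{2\theta}$, and then applies a crude union bound over \emph{all} $n$ users (not just those in $J^{(n)}$), giving $nm\,2^{-c'm^{2\theta}} \le m^2\,2^{-c'm^{2\theta}} \to 0$. Because the tail is exponential, this works for \emph{every} choice $0<\theta<\phi<\alpha/(2-\alpha)$.

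Your approach trades the Sanov bound for Chebyshev, which yields only a polynomial tail $m^{-2\theta}$. You compensate by not throwing away the smallness of $P(j\in J^{(n)})\le 2\epsilon_m\delta_2$: the first-moment bound on $Z^{(n)}$ effectively multiplies the per-user error by $n^{\lambda}$ rather than by $n$. The price is the extra parameter constraint $\phi+2\theta>\alpha/\bigl(2(2-\alpha)\bigr)$, which the paper never needs. As you note, this can be arranged at the point where $\theta,\phi$ are first selected (the other lemmas in the appendix only use $0<\theta<\phi<\alpha/(2-\alpha)$, so tightening the choice breaks nothing), but it does mean that your proof does not establish the lemma for the full parameter range the paper allows---it establishes it for a nonempty subrange that still suffices for Theorem~\ref{two_state_thm}. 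Your version is more elementary (no KL expansion or large-deviation machinery) and handles the randomness of $J^{(n)}$ more cleanly via the tower property; the paper's version is parameter-robust and gives much faster convergence.
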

\begin{proof}
  Let $ j \in J^{(n)}$ and $p_1-\epsilon_m <p_j<p_1+\epsilon_m$. Then
  $ S_j^{(m)} \sim Binomial(m,p_j).$  By Large Deviation Theory (Sanvo's Theorem), we can write
  \[
  P\left(S_j^{(m)} >m(p_1+\beta_m)\right) < \]
  \[(m+1) 2^{-mD\left(Bernoulli(p_1+\beta_m)\|Bernoulli(p_j)\right)}
  \]
  By using the fact that for all $p \in (0,1)$
  \[
    D\left(Bernoulli(p+\epsilon)\|Bernoulli(p)\right) = \frac{\epsilon^2}{2p(1-p)\ln p} + O(\epsilon^3),
  \]
 we can write
\begin{align}
 \no  D\left(Bernoulli(p_1+\beta_m)\|Bernoulli(p_j)\right) =\ \ \ \ \ \ \
\\ \no   \frac{(p_1+\beta_m-p_j)^2}{2p_j(1-p_j)\ln2}+O\left((p_1+\beta_m-p_j)^3\right).
\end{align}
  Note that $
  \left|p_1 - p_j\right| < \epsilon_m$, so for large $m$ we can write \[ \left|p_1+\beta_m-p_j\right| \ge \beta_m-\epsilon_m = \frac{1}{m^{\frac{1}{2}+\phi}} -\frac{1}{m^{\frac{1}{2}-\theta}} > \frac{\frac{1}{2}}{m^{\frac{1}{2}-\theta}}.
\]
 so we can write
\begin{align}
\no D\left(Bernoulli(p_1+\beta_m)\|Bernoulli(p_j)\right) =  \ \ \ \ \ \
 \\  \no \frac{1}{8p_j(1-p_j)m^{1-2\theta} \ln 2 } + O((p_1+\beta_m-p_j)^3)
\end{align}
 and for some constant $c'>0$
 \[D\left(Bernoulli(p_1+\beta_m)\| Bernoulli(p_j)\right)
  > \frac{c'}{m^{1-2\theta}} \Rightarrow \]\[ m D\left(Bernoulli(p_1+\beta_m)\|Bernoulli(p_j)\right)
   > \frac{mc'}{m^{1-2\theta}} > c'm^{2\theta} \Rightarrow
 \]
 \[
    P(S_j^{(m)} >m(p_1+\beta_m)) < m2^{-c'm^{2\theta}}.
 \]
 So in conclusion
 \[
    P\left(\bigcup_{j\in J^{(n)}}S_j^{(m)} >m(p_1+\beta_m)\right) < |J^{(n)}| m2^{-c'm^{2\theta}}
    \]\[|J^{(n)}| m2^{-c'm^{2\theta}}< nm2^{-c'm^{2\theta}} < m^22^{-c'm^{2\theta}}\rightarrow 0 \ \ \ \textrm{as} \ \ m \rightarrow \infty.
 \]

Similarly we obtain
\[
    P\left(\bigcup_{j\in J^{(n)}}S_j^{(m)} <m(p_1-\beta_m)\right) \rightarrow 0 \ \ \ \textrm{as} \ \ m \rightarrow \infty,
 \]
which completes the proof. This shows that for all users $j$  for which $p_j$ is within $2\epsilon$ range around $p_1$, i.e. it is in set $J^{(n)}$, the average number of times that this user was at state 1 is within $m\beta_m$ from $mp_1$ with high probability.
\end{proof}

We are now in a position to show that distinguishing between the users in $J^{(n)}$ is not possible for an outside observer (i.e., the adversary) and this will pave the way in showing perfect location privacy.
\begin{lem}\label{lem3}
  Let $\{a_m\}_{m=1}^\infty$, $\{b_m\}_{m=1}^\infty$ be such that $a_m, b_m$ are in set $A^{(m)}$ and also $\{i_m\}_{m=1}^\infty$, $\{j_m\}_{m=1}^\infty$ be such that $i_m , j_m$ are in set $J^{(n)}$. Then, we have
  \[
    \frac{P(S_{i_m}^{(m)} = a_m , S_{j_m}^{(m)} = b_m)}{P(S_{i_m}^{(m)} = b_m , S_{j_m}^{(m)} = a_m)} \rightarrow 1 \ \ \ \textrm{as}  \ \ m\rightarrow \infty.
  \]
\end{lem}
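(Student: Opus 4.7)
The plan is to directly compute the ratio in closed form using the fact that the coordinates $S_u^{(m)}$ are independent Binomial$(m, p_u)$ random variables across users (since different users' processes are independent), and then show that its logarithm tends to $0$ by combining the length bounds defining $A^{(m)}$ and $J^{(n)}$ with the choices of $\beta_m$ and $\epsilon_m$.

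First I would write out both numerator and denominator explicitly. By independence,
\begin{align*}
P\bigl(S_{i_m}^{(m)}=a_m,\, S_{j_m}^{(m)}=b_m\bigr) &= \binom{m}{a_m}\binom{m}{b_m}\, p_{i_m}^{a_m}(1-p_{i_m})^{m-a_m}\, p_{j_m}^{b_m}(1-p_{j_m})^{m-b_m},
\end{align*}
and similarly for the swapped configuration. The binomial coefficients cancel in the ratio, and after regrouping the exponents one obtains the clean expression
\[
\frac{P\bigl(S_{i_m}^{(m)}=a_m,\, S_{j_m}^{(m)}=b_m\bigr)}{P\bigl(S_{i_m}^{(m)}=b_m,\, S_{j_m}^{(m)}=a_m\bigr)} = \left(\frac{p_{i_m}(1-p_{j_m})}{p_{j_m}(1-p_{i_m})}\right)^{a_m-b_m}.
\]

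Next I would take logarithms and bound each factor separately. From $a_m,b_m\in A^{(m)}$ we have $|a_m-b_m|\le 2m\beta_m$. From $i_m,j_m\in J^{(n)}$ we have $|p_{i_m}-p_{j_m}|\le 2\epsilon_m$, and since $p_1\in(0,1)$ is fixed and $\epsilon_m\to 0$, for all sufficiently large $m$ both $p_{i_m}$ and $p_{j_m}$ lie in a compact subinterval of $(0,1)$ bounded away from $0$ and $1$. A first-order Taylor expansion then gives
\[
\left|\log\frac{p_{i_m}(1-p_{j_m})}{p_{j_m}(1-p_{i_m})}\right| = \left|\log\!\left(1+\frac{p_{i_m}-p_{j_m}}{p_{j_m}}\right) - \log\!\left(1+\frac{p_{i_m}-p_{j_m}}{1-p_{i_m}}\right)\right| = O(\epsilon_m),
\]
with the implied constant depending only on $p_1$.

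Combining these two bounds yields
\[
\left|\log\frac{P\bigl(S_{i_m}^{(m)}=a_m,\, S_{j_m}^{(m)}=b_m\bigr)}{P\bigl(S_{i_m}^{(m)}=b_m,\, S_{j_m}^{(m)}=a_m\bigr)}\right| \le 2m\beta_m \cdot O(\epsilon_m) = O\!\left(m\cdot m^{-(1/2-\theta)}\cdot m^{-(1/2+\phi)}\right) = O(m^{\theta-\phi}).
\]
Since $\theta<\phi$ by construction, $m^{\theta-\phi}\to 0$, so the log-ratio tends to $0$ and the ratio tends to $1$, as required. I do not expect a substantive obstacle here; the only subtlety is verifying that $p_{j_m}$ and $p_{i_m}$ stay bounded away from $0$ and $1$ so that the Taylor estimate is uniform, and this is immediate from $p_1\in(0,1)$ combined with $\epsilon_m\to 0$. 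The whole argument rests on the key design choice $\theta<\phi<\tfrac{\alpha}{2-\alpha}$ made earlier, which is what ensures the exponent gap $m\beta_m\epsilon_m\to 0$.
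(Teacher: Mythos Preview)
Your proof is correct and follows essentially the same route as the paper: both compute the ratio explicitly (the binomial coefficients cancel), take logarithms, bound $|a_m-b_m|\le 2m\beta_m$ and $|p_{i_m}-p_{j_m}|\le 2\epsilon_m$, apply a first-order Taylor estimate on the log term, and conclude from $m\beta_m\epsilon_m=m^{\theta-\phi}\to 0$. Your version is slightly cleaner in that you explicitly flag the need for $p_{i_m},p_{j_m}$ to be bounded away from $0$ and $1$ to make the Taylor bound uniform, which the paper leaves implicit.
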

\begin{proof}
  Remember that
  \[
    A^{(m)} = \{x \in R, m(p_1 -\beta_m)  \leq x  \leq m(p_1 +\beta_m)\}
  \]
  where $\beta_m = \frac{1}{m^{\frac{1}{2}-\theta}}$ and $S_j^{(m)} \sim Binomial(m,p_j)$. Thus,
  $
    S_{i_m}^{(m)} \sim Binomial(m,p_{i_m})$ and $S_{j_m}^{(m)} \sim Binomial(m,p_{j_m}),
  $
  \[
    P(S_{i_m}^{(m)} = a_m) = \dbinom{n}{a_m} p_{i_m}^{a_m} (1- p_{i_m})^{m-a_m},
  \]

  \[
    P(S_{j_m}^{(m)} = b_m) = \dbinom{n}{b_m} p_{j_m}^{b_m} (1- p_{j_m})^{m-b_m}.
  \] In conclusion,
  \[
    \Delta_m = \frac{P(S_{i_m}^{(m)} = a_m , S_{j_m}^{(m)} = b_m)}{P(S_{i_m}^{(m)} = b_m , S_{j_m}^{(m)} = a_m)} = (\frac{p_{i_m}}{p_{j_m}})^{a_m - b_m} (\frac{1-p_{j_m}}{1-p_{i_m}})^{a_m - b_m}
  \]
  \[
  \ln \Delta_m = (a_m - b_m ) \ln (\frac{p_{i_m}}{p_{j_m}}) + (a_m - b_m ) \ln (\frac{1-p_{j_m}}{1-p_{i_m}})
  \]
  and since $ \{i_m , j_m\} \in J^{(n)}$ we have
  \[
    \left|p_{i_m} -p_{j_m}\right| \leq 2 \epsilon_m =  \frac{2}{m^{\frac{1}{2}+\phi}}.
  \]
  Also, since $\{a_m,b_m\} \in A^{(m)}$ we can say that
  \[
    \left|a_m - b_m\right| \leq 2m\beta_m.
  \]
  Since $p_{i_m} \leq p_{j_m} + 2\epsilon_m$ and $1-p_{j_m} \leq (1-p_{i_m}) + 2\epsilon_m$ and \[\ln (1+\epsilon_m) = \epsilon_m + O(\epsilon_m^2)\] we can write
  \[
  \ln \Delta_m \leq 2m\beta_m \epsilon_m +2m\beta_m \epsilon_m + 2m\beta_m O(\epsilon_m^2)
  \]
  and since $\phi > \theta$,
  \[
    m\beta_m \epsilon_m =  m \frac{1}{m^{\frac{1}{2}+\phi}} \frac{1}{m^{\frac{1}{2}-\theta}} = \frac{1}{m^{\phi - \theta}} \rightarrow 0,
  \
  \]
  \[
    \Rightarrow \ln \Delta_m \rightarrow 0
  \]
  \[ \Rightarrow \Delta_m \rightarrow 1.\]
  Note that the convergence is uniform.

This shows that for two users $i$ and $j$, if the probability of them being at state 1 is in set $J^{(n)}$, $p_i,p_j \in J^{(n)}$, and also the observed number of times for these users to be at state 1 is in set $A^{(m)}$, then distinguishing between these two users is impossible.
\end{proof}

\begin{lem}\label{lem4}
  For any $j \in \Pi(J^{(n)})$, we define $W_{j}^{(n)}$ as follows
  \[
  W_{j}^{(n)} = P(\Pi(1) = j |  \textbf{Y}^{(m)} , \Pi(J^{(n)})).
  \]
  Then, for all $j^{(n)} \in \Pi(J^{(n)})$,
  \[
    N^{(n)} W_j^{(n)} \xrightarrow{p} 1.
  \]
More specifically, for all $\gamma_1,\gamma_2>0$ , there exists $n_o$ such that if $n > n_o$:
\[\forall j \in \Pi(J^{(n)}):\ \
  P\left( \left| N^{(n)} W_j^{(n)} - 1 \right| > \gamma_1 \right) < \gamma_2.
\]
\end{lem}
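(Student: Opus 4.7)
The plan is to show that the posterior distribution of $\Pi(1)$, given $\textbf{Y}^{(m)}$ and the image set $\Pi(J^{(n)})$, is asymptotically uniform over $\Pi(J^{(n)})$; since this set has size $N^{(n)}$, each posterior mass $W_j^{(n)}$ must then approach $1/N^{(n)}$, which is precisely the claim $N^{(n)}W_j^{(n)} \xrightarrow{p} 1$. I will prove this by comparing $W_j^{(n)}$ and $W_{j'}^{(n)}$ for arbitrary $j, j' \in \Pi(J^{(n)})$ via a coordinate-swap bijection on permutations, applying Lemmas~\ref{lem0}, \ref{lem2}, and \ref{lem3} in sequence.

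\emph{Setup.} By Lemma~\ref{lem0}, the vector $\mathbf{s} = \textrm{Perm}(\mathbf{S}^{(m)}, \Pi^{(n)})$ is a sufficient statistic for $\Pi^{(n)}$, so $W_j^{(n)}$ may be computed from $\mathbf{s}$ and $\Pi(J^{(n)})$ alone. Let $\mathcal{E}_n = \bigcap_{i \in J^{(n)}}\{S_i^{(m)} \in A^{(m)}\}$; by Lemma~\ref{lem2}, $P(\mathcal{E}_n) \to 1$. I will do the main calculation on $\mathcal{E}_n$ and absorb $\mathcal{E}_n^c$ into the final probability bound.

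\emph{Bijection and the key ratio.} Fix a realization $S = \Pi(J^{(n)})$ and take any $j, j' \in S$. For each permutation $\pi$ with $\pi(1) = j$ and $\pi(J^{(n)}) = S$, define $u_0 = \pi^{-1}(j') \in J^{(n)} \setminus \{1\}$ and let $\pi^*$ be the permutation obtained from $\pi$ by swapping its values at coordinates $1$ and $u_0$. This is a bijection between $\{\pi : \pi(1) = j,\, \pi(J^{(n)}) = S\}$ and $\{\pi : \pi(1) = j',\, \pi(J^{(n)}) = S\}$. From the closed-form expression for $P(\Pi = \pi \mid \mathbf{s})$ derived inside the proof of Lemma~\ref{lem0}, all factors in $P(\Pi = \pi \mid \mathbf{s}) / P(\Pi = \pi^* \mid \mathbf{s})$ cancel except those at coordinates $1$ and $u_0$, and the ratio reduces to exactly the quantity $\Delta_m$ of Lemma~\ref{lem3} with $(i_m, j_m, a_m, b_m) = (1, u_0, s_j, s_{j'})$. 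On $\mathcal{E}_n$ one has $s_j, s_{j'} \in A^{(m)}$ and $1, u_0 \in J^{(n)}$, so Lemma~\ref{lem3} gives $\Delta_m = 1 + o(1)$. Summing through the bijection then yields $W_j^{(n)} = (1 + o(1))\, W_{j'}^{(n)}$ uniformly in $j, j' \in S$. Since $1 \in J^{(n)}$ forces $\Pi(1) \in S$, we have $\sum_{j \in S} W_j^{(n)} = 1$, and hence $N^{(n)} W_j^{(n)} = 1 + o(1)$ on $\mathcal{E}_n$. Given $\gamma_1, \gamma_2 > 0$, pick $n_0$ large enough that this $o(1)$ falls below $\gamma_1$ and that $P(\mathcal{E}_n^c) < \gamma_2$; this yields the stated convergence in probability.

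\emph{Main obstacle.} The subtle point is uniformity: the $o(1)$ bound for $\Delta_m$ must hold simultaneously over every pair $(j, j') \in \Pi(J^{(n)})^2$ and every $\pi$ in the bijection (hence for every random $p_{u_0}$ with $u_0 \in J^{(n)}$). Fortunately, the proof of Lemma~\ref{lem3} yields $|\ln \Delta_m| \le C\, m \beta_m \epsilon_m = C\, m^{-(\phi - \theta)}$, a bound depending only on the deterministic scales $\beta_m$ and $\epsilon_m$ — its only inputs being the worst-case estimates $|p_{i_m} - p_{j_m}| \le 2\epsilon_m$ (which holds for all $i_m, j_m \in J^{(n)}$) and $|a_m - b_m| \le 2m\beta_m$ (which holds for all $a_m, b_m \in A^{(m)}$) — so the required uniformity comes for free.
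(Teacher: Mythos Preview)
Your proof is correct and follows essentially the same approach as the paper: compare $W_j^{(n)}$ to $W_{j'}^{(n)}$ via a coordinate-swap bijection on permutations, invoke Lemma~\ref{lem3} to show each term-by-term ratio is $1+o(1)$, then use $\sum_{j} W_j^{(n)} = 1$ to conclude $N^{(n)}W_j^{(n)}\to 1$. Your write-up is in fact tighter than the paper's---you explicitly restrict to the high-probability event $\mathcal{E}_n$ from Lemma~\ref{lem2} and explicitly argue uniformity of the $\Delta_m\to 1$ bound, both of which the paper's proof handles only implicitly.
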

\begin{proof}
  This is the result of Lemma~\ref{lem3}. First, remember that
  \[
    \sum_{j \in  \Pi(J^{(n)})} W_{j}^{(n)} = 1,
  \]
  and also note that
  \[
    |\Pi(J^{(n)})| = |J^{(n)}| = N^{(n)} \rightarrow \infty \ \ \ \textrm{as} \ \ n \rightarrow \infty.
  \]
  Here, we show that for any  $\{j_{n}\}_{n=1}^\infty \in \Pi(J^{(n)})$,
  \[
    \frac{W_{j_n}^{(n)}}{W_{1}^{(n)}} = \frac{P\left(\Pi(1) = j |D\right)}{P\left(\Pi(1) = 1 |D\right)} \xrightarrow{p} 1
  \]
  where $ D = \left(\textbf{Y}^{(m)} , \Pi(J^{(n)})\right)$.

  Let $a_i$, for $ i \in \Pi(J^{(n)})$, be the permuted observed values of $S^{(m)}_i$'s. Then note that
  \[
    P(\Pi(1) = j |D) = \sum_{\substack{\textrm{permutation}\\ \textrm{such that } \Pi(1) = j }} \sum_{ i \in \Pi(J) } P(S_i^ {(m)} = a_i ).
  \]
  Then, in
  \[
    \frac{W_{j_{n}}^{(n)}}{W_{1}^{(n)}} = \frac{P(\Pi(1) = j |D)}{P(\Pi(1) = 1 |D)}
  \]
  the numerator and denominator have the same terms. In particular, for each term
  \[
    P(S_j^{(m)} = a_{j_n})\times P(S_1^{(m)} = b_{j_n} ) \times \textrm{[other  terms]}
  \] in $W_j^{(n)}$, there is a corresponding term
  \[
    P(S_j^{(m)} = b_{j_n})\times P(S_1^{(m)} = a_{j_n} ) \times \textrm{[other  terms]}
  \] in $W_1^{(n)}$.
  Since by Lemma~\ref{lem3}
  \[
    \frac {P(S_j^{(m)} = a_{j_n})\times P(S_1^{(m)} = b_{j_n} )}{  P(S_j^{(m)} = b_{j_n})\times P(S_1^{(m)} = a_{j_n} )}
  \] converges uniformly to 1, we conclude
  \[
    \frac{W_{j_{n}}^{(n)}}{W_{1}} \rightarrow 1.
  \]
   We conclude that for any $\zeta>0$, we can write (for large enough $n$)
  \[
      (1-\zeta)<\frac{W_{j_{n}}^{(n)}}{W_{1}} < (1+\zeta),
  \]

  \[
    \sum_{j \in  \Pi(J^{(n)})} (1-\zeta)W_{1}^{(n)} < \sum_{j \in  \Pi(J^{(n)})} W_{j_{n}}^{(n)} < \sum_{j \in  \Pi(J^{(n)})} (1+\zeta)W_{1}^{(n)}
  \]
  and since
  $ \sum_{j \in  \Pi(J^{(n)})} W_{j_{n}}^{(n)} = 1,
  $
  $
    |\Pi(J^{(n)})| = N^{(n)},
  $ we have
  \[
  (1-\zeta)N^{(n)}W_{1}^{(n)} < 1 < (1+\zeta)N^{(n)}W_{1}^{(n)}
  \]
  so, we conclude that $ N^{(n)} W_1^{(n)} \rightarrow 1$ as $n \rightarrow \infty$. We can repeat the same argument for all users in set $j \in J^{(n)}$ and we get $N^{(n)} W_{j}^{(n)} \rightarrow 1$ as $n \rightarrow \infty$.
\end{proof}

Now to finish the proof of Theorem 1,
\begin{align}
   \no P( X_1(k)|\textbf{Y}^{(m)} , \Pi(J^{(n)}) ) =\ \ \ \ \ \ \ \ \ \ \
   \\ \no \sum_{j \in \Pi(J^{(n)}) } P(X_1(k)|\textbf{Y}^{(m)} , \Pi(1) = j , \Pi(J^{(n)})) \times \\ \no  P( \Pi(1) = j | \textbf{Y}^{(m)} , \Pi(J^{(n)}))
  =
  \\ \no \sum_{j \in \Pi(J^{(n)})} 1_{[Y^{(m)}_j(k) =1]}W_j^{(n)}  \triangleq Z_n.
\end{align}
But, since $Y^{(m)}_j(k) \sim Bernoulli(p_j^{(n)})$ and $p_j^{(n)} \rightarrow p_1$ for all $j \in \Pi(J^{(n)})$, by the law of large numbers we have: \[
  \frac{1}{N^{(n)}} \sum_{j \in \Pi(J^{(n)})} 1_{[Y^{(m)}_j(k) =1]} \rightarrow p_1
\]
\[
Z_n = \frac{1}{N^{(n)}} \sum_{j \in \Pi(J^{(n)})}(1_{[Y^{(m)}_j(k) =1]} ) (N^{(n)}W_j^{(n)})
\]
which $
  (N^{(n)}W_j^{(n)}) \rightarrow 1.
$
Thus, $ Z_n \rightarrow p_1$.

In conclusion $ X_1(k)|\textbf{Y}^{(m)} , \Pi(J^{(n)}) \xrightarrow{d} Bernoulli(p_1)$ which means that
\[
  H\left(X_1(k)|\textbf{Y}^{(m)} , \Pi(J^{(n)})\right) \rightarrow H(X_1(k))
\]
\[
  \Rightarrow  H\left(X_1(k)|\textbf{Y}^{(m)}\right) \leq   H\left(X_1(k)|\textbf{Y}^{(m)} , \Pi(J^{(n)})\right) \rightarrow H(X_1(k))
\]
\[
  \Rightarrow I\left(X_1(k);\textbf{Y}^{(m)})\right) \rightarrow 0
\]

\section{Proof of Lemma 1}\label{sec:app_b}
Here, we provide a formal proof for Lemma \ref{lem-Markov-suff} which we restate as follows. In the Markov chain setting of section \ref{sec:MC}, we have the following: Given $\textrm{Perm} \big(\mathbf{M}^{(m)}, \Pi^{(n)} \big)$, the random matrix $\textbf{Y}^{(m)}$ and the random permutation $\Pi^{(n)}$ are conditionally independent. That is
 \begin{align}\label{eq:sufficient-mc}
 P\left(\Pi^{(n)}=\pi \ \ \bigg{ | } \ \ \textbf{Y}^{(m)}=\mathbf{y}, \textrm{Perm} \big(\mathbf{M}^{(m)}, \Pi^{(n)} \big)=\mathbf{m} \right)  =
 \\ \no P\left(\Pi^{(n)}=\pi \ \ \bigg{ | } \ \ \textrm{Perm} \big(\mathbf{M}^{(m)}, \Pi^{(n)} \big)=\mathbf{s} \right)
\end{align}

\begin{proof}

Remember
\begin{align}
\nonumber \textbf{Y}^{(m)} &=\textrm{Perm}\left( \textbf{X}_{1}^{(m)}, \textbf{X}_{2}^{(m)}, \cdots,  \textbf{X}_{n}^{(m)}; \Pi^{(n)} \right) \\
\nonumber &=\left( \textbf{X}_{\Pi^{-1}(1)}^{(m)}, \textbf{X}_{\Pi^{-1}(2)}^{(m)}, \cdots,  \textbf{X}_{\Pi^{-1}(n)}^{(m)}\right).
\end{align}
Note that $\textbf{Y}^{(m)}$ (and therefore $\mathbf{y}$) is an $m$ by $n$ matrix, so we can write
\begin{align}
\nonumber \mathbf{y} &=\left( \textbf{y}_{1}, \textbf{y}_{2}, \cdots,  \textbf{y}_{n} \right),
\end{align}
where for $u=1,2,\cdots, n$, we have
\[\mathbf{y}_u = \begin{bmatrix}
\nonumber       y_u(1) \\ y_u(2) \\ \vdots \\y_u(m) \end{bmatrix}.
\]
Also, $\mathbf{m}$ is a collection of $n$ matrices so we can write

\begin{align}
\nonumber \mathbf{m}&=\left( \mathbf{m}_{1}, \mathbf{m}_{2}, \cdots, \mathbf{m}_{n} \right).
\end{align}

For an $r \times r$ matrix $\textbf{m}=[m(i,j)]$, let's define $D(\mathbf{m})$ as the set of sequences $(x_1,x_2,\cdots,x_m) \in \{1,2,\cdots,r\}^{m}$ that satisfy the following properties:
\begin{enumerate}
  \item $x_0=1$;
  \item The number of transitions from $i$ to $j$ in $(x_1,x_2,\cdots,x_m)$ is equal to $m_{ij}$ for all $i$ and $j$. That is, the number of indices $k$ for which we have $x_k=i$ and $x_{k+1}=j$ is equal to $m(i,j)$.
\end{enumerate}

We now show that the two sides of Equation \ref{eq:sufficient-mc} are equal. The right hand side probability can be written as

 \begin{align}
 \no P\left(\Pi^{(n)}=\pi \ \ \bigg{ | } \ \ \textrm{Perm} \big(\mathbf{M}^{(m)}, \Pi^{(n)}\big)=\mathbf{m} \right)\ \ \ \ \ \ \ \ \ \ \ \ \ \ \ \
 \\ \no = \frac{P\left( \textrm{Perm} \big(\mathbf{M}^{(m)}, \Pi^{(n)}\big)=\mathbf{m} \ \ \bigg{ | } \ \ \no \Pi^{(n)}=\pi  \right) P\left( \Pi^{(n)}=\pi  \right)}{P \bigg(\textrm{Perm} \big(\mathbf{M}^{(m)}, \Pi^{(n)}\big)=\mathbf{m}\bigg)}\\
\no =\frac{P\left( \textrm{Perm} \big(\mathbf{M}^{(m)}, \pi \big)=\mathbf{m} \ \ \bigg{ | } \ \ \no \Pi^{(n)}=\pi  \right)} { n! P \bigg(\textrm{Perm} \big(\mathbf{M}^{(m)},\Pi^{(n)}\big)=\mathbf{m} \bigg)}\\
\no =\frac{P\left( \textrm{Perm} \big(\mathbf{M}^{(m)}, \pi \big)=\mathbf{m} \right)} { n! P \bigg(\textrm{Perm} \big(\mathbf{M}^{(m)},\Pi^{(n)}\big)=\mathbf{m} \bigg)}.
\end{align}
Now note that
 \begin{align}
 \no P\left( \textrm{Perm} \big(\mathbf{M}^{(m)}, \pi \big)=\mathbf{m} \right) =P \left( \bigcap_{j=1}^{n} \left( \textbf{M}_{\pi^{-1}(j)}^{(m)}=\textbf{m}_j\right) \right)\\
 \no =P \left( \bigcap_{u=1}^{n} \left( \textbf{M}_{u}^{(m)}=\textbf{m}_{\pi (u)}\right) \right)\\
 \no =\prod_{u=1}^{n} P\left( \textbf{M}_{u}^{(m)}=\textbf{m}_{\pi (u)}\right)\end{align}
 \begin{align}
 \no = \prod_{u=1}^{n}  \sum_{\substack{(x_1,x_2,\cdots,x_m) \in \\ D(\textbf{m}_{\pi (u)})}}\hspace{-2mm} P( X_{u}(1)=x_1, X_{u}(2)=x_2, \cdots, X_{u}(m)=x_m)\\
 \no = \prod_{u=1}^{n}  \sum_{(x_1,x_2,\cdots,x_m) \in D(\textbf{m}_{\pi (u)})}  \prod_{i,j} p_u(i,j)^{\textbf{m}_{\pi (u)}(i,j)} \\
 \no = \prod_{u=1}^{n} \left( |D(\textbf{m}_{\pi (u)})|  \prod_{i,j} p_u(i,j)^{\textbf{m}_{\pi (u)}(i,j)} \right)\\
 \no = \left(  \prod_{k=1}^{n} |D(\textbf{m}_k)|\right) \left(  \prod_{u=1}^{n}  \prod_{i,j} p_u(i,j)^{\textbf{m}_{\pi (u)}(i,j)}\right)
\end{align}
Similarly, we obtain
 \begin{align}
 \no P \bigg(\textrm{Perm} \big(\mathbf{M}^{(m)},\Pi^{(n)}\big)=\mathbf{m} \bigg) = \ \ \ \ \ \ \ \ \ \ \ \ \ \ \ \ \ \ \ \
 \\ \no \sum_{\textrm{all permutations }\pi'} \hspace{-6mm}P\left( \textrm{Perm} \big(\mathbf{M}^{(m)}, \pi' \big)=\mathbf{m} \ \ \bigg{ | } \ \ \no \Pi^{(n)}=\pi'  \right) P\left( \Pi^{(n)}=\pi'  \right) \\
 \no = \frac{1}{n!} \sum_{\textrm{all permutations }\pi'}  \left(  \prod_{k=1}^{n} |D(\textbf{m}_k)|\right) \left(  \prod_{u=1}^{n}  \prod_{i,j} p_u(i,j)^{\textbf{m}_{\pi' (u)}(i,j)}\right)\\
 \no = \frac{1}{n!} \left(  \prod_{k=1}^{n} |D(\textbf{m}_k)|\right)  \sum_{\textrm{all permutations }\pi'}   \left(  \prod_{u=1}^{n}  \prod_{i,j} p_u(i,j)^{\textbf{m}_{\pi' (u)}(i,j)}\right).
\end{align}
Thus, we conclude that the right hand side of Equation \ref{eq:sufficient-mc} is equal to
 \begin{align}
 \no  \frac{\prod_{u=1}^{n}  \prod_{i,j} p_u(i,j)^{\textbf{m}_{\pi (u)}(i,j)}}{\sum_{\textrm{all permutations }\pi'}   \left(  \prod_{u=1}^{n}  \prod_{i,j} p_u(i,j)^{\textbf{m}_{\pi' (u)}(i,j)}\right) }.
\end{align}

Now let's look at the left hand side of Equation \ref{eq:sufficient-mc}. We can write
\begin{align}
 \no P\left(\Pi^{(n)}=\pi \ \ \bigg{ | } \ \ \textbf{Y}^{(m)}=\mathbf{y}, \textrm{Perm} \big(\mathbf{M}^{(m)}, \Pi^{(n)} \big)=\mathbf{m} \right)  =
 \\ \no  P\left(\Pi^{(n)}=\pi \ \ \bigg{ | } \ \ \textbf{Y}^{(m)}=\mathbf{y} \right).
\end{align}
This is because $\textrm{Perm} \big(\mathbf{M}^{(m)}, \Pi^{(n)} \big)$ is a function of $\textbf{Y}^{(m)}$. We have

 \begin{align}
 \no P\left(\Pi^{(n)}=\pi \ \ \bigg{ | } \ \ \textbf{Y}^{(m)}=\mathbf{y} \right)=  \ \ \ \ \ \ \ \ \ \ \ \ \
 \\ \no \frac{ P\left(\textbf{Y}^{(m)}=\mathbf{y}\ \ \bigg{ | } \ \  \Pi^{(n)}=\pi   \right) P\left(  \Pi^{(n)}=\pi \right)}{P\left(\textbf{Y}^{(m)}=\mathbf{y} \right)}
\end{align}

We have
 \begin{align}
 \no P\left(\textbf{Y}^{(m)}=\mathbf{y}\ \ \bigg{ | } \ \  \Pi^{(n)}=\pi   \right)= \ \ \ \ \ \ \ \ \ \ \ \ \ \ \ \ \ \ \
 \\ \no \prod_{u=1}^{n} P\bigg{(} X_{u}(1)=y_{\pi (u)}(1), X_{u}(2)=y_{\pi (u)}(2), \\ \no \cdots, X_{u}(m)=y_{\pi (u)}(m)\bigg{)}\\
 \no = \prod_{u=1}^{n}  \prod_{i,j} p_u(i,j)^{\textbf{m}_{\pi (u)}(i,j)}.
\end{align}
Similarly, we obtain
 \begin{align}
 \no P\left(\textbf{Y}^{(m)}=\mathbf{y} \right)&= \frac{1}{n!} \sum_{\textrm{all permutations }\pi'} \prod_{u=1}^{n} \prod_{i,j} p_u(i,j)^{\textbf{m}_{\pi' (u)}(i,j)}
\end{align}
Thus, we conclude that the left hand side of Equation \ref{eq:sufficient-mc} is equal to
 \begin{align}
 \no  \frac{\prod_{u=1}^{n}  \prod_{i,j} p_u(i,j)^{\textbf{m}_{\pi (u)}(i,j)}}{\sum_{\textrm{all permutations }\pi'}   \left(  \prod_{u=1}^{n}  \prod_{i,j} p_u(i,j)^{\textbf{m}_{\pi' (u)}(i,j)}\right) },
\end{align}
which completes the proof.

\end{proof}

\bibliographystyle{IEEEtran}
\bibliography{journal}

%
%
%
%

\end{document}